\definecolor{red}{rgb}{0.7,0.15,0.15}
\definecolor{green}{rgb}{0,0.5,0}
\definecolor{blue}{rgb}{0,0,0.7}
\makeatletter \@addtoreset{equation}{section}
\newtheorem{theorem}{Theorem}[section]
\newtheorem{assumption}[theorem]{Assumption}
\newtheorem{lemma}[theorem]{Lemma}
\newtheorem{proposition}[theorem]{Proposition}
\newtheorem{definition}[theorem]{Definition}
\newtheorem{remark}[theorem]{Remark}
\newcommand{\diff}{\,\mathrm{d}}
\newcommand\cF{\mathcal F}
\newcommand\cW{\mathcal W}
\newcommand\PP{\mathbb P}
\newcommand\RR{\mathbb R}
\providecommand{\Keywords}[1]{\textbf{Keywords} #1}
\providecommand{\abstract}[1]{\textbf{Abstract} #1}
\providecommand{\msc}[1]{\textbf{Mathematics Subject Classification} #1}
\providecommand{\statement}[1]{\textbf{Statements and Declarations} #1}
\def \E{\mathbb{E}}
\def \L{\mathbb{L}}
\def \N{\mathbb{N}}
\def \P{\mathbb{P}}
\def \R{\mathbb{R}}
\def \Z{\mathbb{Z}}
\def\Bc{\mathcal{B}}
\def\d{\mathrm{d}}
\DeclareMathOperator*{\argmin}{arg\,min}
\DeclareMathOperator*{\esssup}{ess\,sup}
\begin{document}
\title{Non-asymptotic estimation of risk measures using stochastic gradient Langevin dynamics}
\author{Jiarui Chu\footnote{Corresponding Author: Jiarui Chu\newline \hspace*{1.5em} Affiliation: Princeton University, ORFE\newline  \hspace*{1.8em}E-mail Address:jiaruic@princeton.edu} $\qquad \qquad$  Ludovic Tangpi\footnote{Princeton University, ORFE, ludovic.tangpi@princeton.edu}}

\maketitle

\abstract{
In this paper we will study the approximation of some law invariant risk measures. As a starting point, we approximate the average value at risk using stochastic gradient Langevin dynamics, which can be seen as a variant of the stochastic gradient descent algorithm. Further, the Kusuoka's spectral representation allows us to bootstrap the estimation of the average value at risk to extend the algorithm to general law invariant risk measures. We will present both theoretical, non-asymptotic convergence rates of the approximation algorithm and numerical simulations.

}
\Keywords{
Convex risk measure $\cdot$ Stochastic Optimization $\cdot$ Risk minimization $\cdot$ Average value at risk $\cdot$ Stochastic gradient Langevin
}

\msc{91G70 $\cdot$ 90C90}

\statement{The authors gratefully acknowledge support from the NSF grant DMS-2005832. The authors have no competing interests to declare. }
\setlength{\parindent}{0pt}
\section{Introduction}\label{sec:intro}
Every financial decision involves some degree of risk. Quantifying risk associated with a future random outcome allows organizations to compare financial decisions and develop risk management plans to prepare for potential loss and uncertainty. By the seminal work of \citeauthor*{Artzner1999} \cite{Artzner1999}, the canonical way to quantify the riskiness of a \emph{random} financial position $X$ is to compute the \emph{number} $\rho(X)$ for a convex risk measure $\rho$, whose definition we recall.
\begin{definition}
	A mapping $\rho: \L^{\infty} \to \R$ is a convex risk measure if it satisfies the following conditions for all $X, Y \in \mathcal{X}$:
	\begin{itemize}
    	\item translation invariance: $\rho(X+m)=\rho(X)-m$ for all $m \in \R$
    	\item monotonicity: $\rho(X) \geq \rho(Y)$ if $X \leq Y$
    	\item convexity: $\rho(\lambda X+(1-\lambda) Y)\leq \lambda \rho(X)+(1-\lambda)\rho(Y)$ for $\lambda \in [0,1]$.
	\end{itemize}\label{risk}
\end{definition}
Intuitively\footnote{In the rest of the paper, for notation simplicity, we will assume risk measures to be increasing.
That is, we work with $\rho(-X)$.
This does not restrict the generality.}, $\rho(X)$ measures the minimum amount of capital that should be added to the current financial position $X$ to make it acceptable. 
Due to its fundamental importance in quantitative finance, the theory of risk measures (sometimes called quantitative risk management) has been extensively developed.
We refer for instance to \cite{Del12,foellmer02,Jou-Sch-Tou,Cheridito2006,Kupper2009,cheridito2009risk,fritelli03,kusuoka2001law} for a few milestones and the influential textbooks of \citeauthor*{McNeil15} \cite{McNeil15} and \citeauthor*{foellmer01} \cite{foellmer01} for overviews.

\medskip

An important problem for risk managers in practice is to efficiently simulate the number $\rho(X)$ for a financial position $X$ and a risk measure $\rho$.
The difficulty here stems from the fact that, unless $\rho$ is a ``simple enough'' risk measure and the law of $X$ belongs to a tractable family of distributions, there are no closed form formula allowing to compute $\rho(X)$.
The goal of this paper is to develop a method allowing to \emph{numerically simulate} the riskiness $\rho(X)$ for general convex risk measures, and when the law of $X$ is not necessarily known (as it is the case in practical applications).

\medskip

One commonly used measure of the riskiness of a financial position is the value at risk (VaR). 
For a given risk intolerance $u \in (0,1)$, the value at risk $\mathrm{VaR}_u(X)$ of $X$ is the $(1-u)$-quantile of the distribution of $X$. 
Despite the various shortcomings of this measure of risk documented by the academic community \cite{McNeil15}, $\mathrm{VaR}$ remains the standard in the banking industry, and due to its widespread use, the computation of $\mathrm{VaR}$ has been extensively studied. 
We refer interested readers for instance to \citep{glasserman,hoogerheide,butler,fan}, and references therein for various simulation techniques. 
Recommendations \cite{Basel3} from the Basel Committee on Banking Supervision which advises on risk management for financial institutions have revived the development of convex risk measures such as the average value at risk (AVaR), also called conditional value at risk or expected shortfall.
This risk measure is the expected loss given that losses are greater than or equal to the $\mathrm{VaR}$. 
That is, $\mathrm{AVaR}_u$ is given by:
\begin{equation}\label{definition}
	\mathrm{AVaR}_u(X) :=  \E [X|X> \mathrm{VaR}_u(X)].
\end{equation}
For general distributions of $X$, $\mathrm{AVaR}_u(X)$ usually does not have closed form expressions. 
Therefore, in practice, numerical estimations are often required.
As a result, the estimation of $\mathrm{AVaR}$ has received considerable attention. 
We refer for instance to works by \citeauthor*{Eck-Kup19} \cite{Eck-Kup19} and \citeauthor*{Bueh-Gon-Tei-Wood18} \cite{Bueh-Gon-Tei-Wood18} in which (among other things) the simulation of optimized certainty equivalents (of which $\mathrm{AVaR}$ is a particular case) are considered using deep learning techniques.
One approximation technique for the $\mathrm{AVaR}$ is based on Monte-Carlo type algorithm.
In this direction, let us refer for instance to \citeauthor*{avar} \cite{avar} and \citeauthor*{Chen08} \cite{Chen08} on Monte Carlo estimation of $\mathrm{VaR}$ and $\mathrm{AVaR}$, \citeauthor*{Zhou} \cite{Zhou} on nested Monte Carlo estimation. 
More recently, motivated by developments of gradient descent methods in stochastic optimization, (in particular the stochastic Langevin gradient descent (SGLD) technique), \citeauthor*{sgld} \cite{sgld} provide non-asymptotic error bounds for the estimation of $\mathrm{AVaR}$.
Other works developing such gradient descent techniques in the context of risk management include \citeauthor*{iyengar} \cite{iyengar},
\citeauthor*{tamar} \cite{tamar} and \citeauthor*{soma} \cite{soma}. 
Essentially, these papers take advantage of new developments in machine learning and optimization, see e.g.\ \citeauthor*{allen} \cite{allen}, \citeauthor*{gelfand} \cite{gelfand}, 
\citeauthor*{nesterov} \cite{nesterov} and \citeauthor*{Raginsky} \cite{Raginsky}. Let us also mention the recent work of \citeauthor*{Reppen-Soner21} \cite{Reppen-Soner21} who develop a data--driven approach based on ideas from learning theory.

\medskip

In this work we go beyond the numerical simulation of $\mathrm{AVaR}$ by extending stochastic gradient descent type techniques to compute a large family of risk measures, including the $\mathrm{AVaR}$. 
We are interested in this work in deriving explicit (non--asymptotic) error estimates for the approximation.
We will restrict our attention to law-invariant convex risk measures (whose definition we recall below), since in practice, only the law of a financial position can be (approximately) observed. 
In fact, the requirement for a risk measure to be law-invariant is natural and is satisfied by most risk measures\footnote{All risk measures considered in this work will be implicitly assumed to be convex law--invariant risk measures.}.
\begin{definition}\citep{law}
	A risk measure $\rho$ is law-invariant if for all $X,Y$ with the same distribution, we have $\rho(X)=\rho(Y).$\label{law}
\end{definition}
To the best of our knowledge, the papers considering (non-parametric) estimation of general convex risk measures are \citeauthor*{Weber07} \cite{Weber07}, \citeauthor*{belomestny2012central} \cite{belomestny2012central} and \citeauthor*{Ludo} \cite{Ludo}.
These papers consider a (data-driven) Monte-Carlo estimation method by proposing a plug-in estimator based on the empirical measure of the historical observations of the underlying distribution of the random outcome.
\citeauthor*{Weber07} \cite{Weber07} proves a large deviation theorem, and \citeauthor*{belomestny2012central} \cite{belomestny2012central} provide a central limit theorem.
Note that both of these papers give asymptotic estimation results.
Bartl and Tangpi \cite{Ludo} provide sharp non-asymptotic convergence rates for the estimation.

\medskip
To estimate general law-invariant convex risk measures, we rely on the \citeauthor*{kusuoka2001law}'s spectral representation \cite{kusuoka2001law}. 
Intuitively, this representation says that any law invariant risk measure can be constructed as an integral of the $\mathrm{AVaR}$ risk measure. Therefore, the first step of our approximation of general law--invariant risk measures is to estimate $\mathrm{AVaR}$. 
Since we would like to analyze approximation algorithms for the risk of claims with possibly non-convex payoffs, we employ the idea of \citeauthor*{Raginsky} \cite{Raginsky} and use the stochastic gradient Langevin dynamic which, essentially, adds a Gaussian noise to the unbiased estimate of the gradient in stochastic gradient descent. 
To quantify the distance between the estimator and the true value of the risk measure, we present non--asymptotic rates on the mean squared estimation error both in the case of a $\mathrm{AVaR}$, and of general law--invariant risk measures. 
The proof of the mean squared error of estimating the $\mathrm{AVaR}$ makes use of the observation that the SGLD algorithm is a variant of the Euler-Maruyama discretization of the solution of the Langevin stochastic differential equation (SDE). 
This observation allows us to use results on the convergence rate of the Euler-Maruyama scheme and classical techniques of deriving the convergence rate of the solution of the Langevin SDE to the invariant measure. For the rate on the mean squared estimation error of the general case, our proof relies heavily on \citeauthor*{kusuoka2001law}'s representation which allows to build general law--invariant risk measures from $\mathrm{AVaR}$. 

\medskip

Beyond our theoretical guarantees for the convergence of approximation algorithms for general convex risk measures, the present work also contributes to the non-convex optimization literature in that we propose a new proof for the convergence of SGLD algorithms for some non--convex objective functions.
The idea is essentially to reduce the problem into the analysis of contractivity properties of the semi--group originating from a Langevin diffusion with non--convex potential.
This problem was notably investigated by \citeauthor*{EberlePTRF} \cite{EberlePTRF}.

\medskip
The paper is organized as follows:
We start by describing the approximation techniques and presenting the main results in \Cref{eq:approx.meth}. In the same section, we also present numerical results on the estimation of AVaR. In \Cref{sec:proofs.avar}, we prove the rates on the mean squared error for the estimation of AVaR. The derivation of the mean squared error for the estimation of a general law-invariant risk measure is done in \Cref{sec:proofs.genRM}. 

\medskip
\textbf{Notations:} Let $\mathbb{N}^\star :=\mathbb{N}\setminus\{0\}$ and let $\mathbb{R}_+^\star $ be the set of real positive numbers. Fix an arbitrary Polish space $E$ endowed with a metric $d_E$. Throughout this paper, for every $p$-dimensional $E$-valued vector $e$ with $p\in \mathbb{N}^\star $, we denote by $e^{1},\ldots,e^{p}$ its coordinates.
For $(\alpha,\beta) \in \R^p\times\R^p$, we also denote by $\alpha\cdot \beta$ the usual inner product, with associated norm $\|\cdot\|$, which we simplify to $|\cdot|$ when $p$ is equal to $1$. 
For any $(\ell,c)\in\mathbb N^\star \times\mathbb N^\star $, $E^{\ell\times c}$ will denote the space of $\ell\times c$ matrices with $E$-valued entries.

\medskip
Let $\Bc(E)$ be the Borel $\sigma$-algebra on $E$ (for the topology generated by the metric $d_E$ on $E)$. For any $p\geq 1$, for any two probability measures  $\mu$ and $\nu$ on $(E,\Bc(E))$ with finite $p$-moments, we denote by $\cW_{p}(\mu, \nu)$ the $p$-Wasserstein distance between $\mu$ and $\nu$, that is
\begin{equation*}
	\cW_{p}(\mu, \nu) := \bigg(\inf_{\alpha\in\Gamma(\mu,\nu)}\int_{E\times E}d(x,y)^p\alpha(\mathrm{d}x,\mathrm{d}y) \bigg)^{1/p},
\end{equation*}
where the infimum is taken over the set $\Gamma(\mu,\nu)$ of all couplings $\pi$ of $\mu$ and $\nu$, that is, probability measures on $\big(E^2,\Bc(E)^{\otimes 2}\big)$ with marginals $\mu$ and $\nu$ on the first and second factors respectively.

\section{Approximation technique and main results}
 \label{eq:approx.meth}
In this section we rigorously describe the approximation method developed in this article as well as our main results.
Throughout, we fix a probability space $(\Omega, \cF, \P)$ on which all random variables will be defined, unless otherwise stated.
Let us denote by $\L^\infty$ the space of essentially bounded random variables on this probability space.
The starting point of our method is based on the following spectral representation of law-invariant risk measures due to \citeauthor*{kusuoka2001law} \cite{kusuoka2001law}:
\begin{theorem}
	A mapping $\rho:\L^\infty \to \R$ is a law-invariant risk measure if and only if it satisfies
	\begin{equation}\label{Kusuoka}
		\rho(X)=\sup_{\gamma\in \mathcal{M}}\bigg(\int_{[0,1)}\text{AVaR}_u(X)\gamma(du)-\beta(\gamma)\bigg), \text{ for all } X \in \L^{\infty}
	\end{equation}
	for some functional $\beta:\mathcal{M} \to [0,\infty)$, where $\mathcal{M}$ is the set of all Borel probability measures on $[0,1]$.\label{thm:Kusuoka}
\end{theorem}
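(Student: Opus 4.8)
The plan is to establish the two implications separately, with the forward (``if'') direction a stability check and the converse (``only if'') carrying all the weight; throughout I would use the standing assumption — indispensable for Kusuoka-type results — that $(\Omega,\cF,\P)$ is atomless, so that it supports a uniformly distributed random variable.

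\textbf{The ``if'' direction.} For each fixed $u$, $X\mapsto\mathrm{AVaR}_u(X)$ is itself a law-invariant convex risk measure: it is monotone, translation invariant, convex, and a function of the law of $X$ only. For a fixed $\gamma\in\cM$, the functional $X\mapsto\int_{[0,1)}\mathrm{AVaR}_u(X)\,\gamma(\diff u)-\beta(\gamma)$ then inherits monotonicity and convexity by linearity and monotonicity of the integral, and translation invariance because $\gamma$ is a probability measure (so the added constant integrates to itself) while $\beta(\gamma)$ is a constant. Finally, a pointwise supremum of law-invariant convex risk measures is again one, provided it is finite — and finiteness is exactly the hypothesis that $\rho$ maps into $\R$. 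Hence any $\rho$ given by \eqref{Kusuoka} is a law-invariant convex risk measure.

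\textbf{The ``only if'' direction.} This I would carry out in three steps. \emph{Step 1 --- robust representation.} On an atomless $\L^\infty$, every law-invariant convex risk measure has the Fatou property by the theorem of Jouini, Schachermayer and Touzi \cite{Jou-Sch-Tou}, so Fenchel--Moreau duality (see, e.g., \cite{foellmer01}) applies and gives
\[
	\rho(X)=\sup_{Q\ll\P}\big(\E_Q[X]-\alpha(Q)\big),\qquad \alpha(Q):=\sup_{Y\in\L^\infty}\big(\E_Q[Y]-\rho(Y)\big),
\]
and law-invariance of $\rho$ passes to $\alpha$: $\alpha(Q)$ depends on $Q$ only through the law of the density $Z_Q=\diff Q/\diff\P$. \emph{Step 2 --- comonotone rearrangement.} The analytic heart of the argument is the Hardy--Littlewood inequality: for $X\in\L^\infty$ and a density $Z$ (i.e.\ $Z\geq0$, $\E[Z]=1$) one has $\E[XZ]\leq\int_0^1 q_X(t)\,q_Z(t)\,\diff t$ with $q_X,q_Z$ the quantile functions, and on an atomless space the bound is attained by a suitable $\tilde Z\overset{d}{=}Z$ that is comonotone with $X$. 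Since $\alpha$ only sees the law of the density, replacing every $Q$ by its comonotone rearrangement leaves the supremum unchanged, whence
\[
	\rho(X)=\sup_{\nu}\Big(\int_0^1 q_X(t)\,q_\nu(t)\,\diff t-\widetilde\alpha(\nu)\Big),
\]
the supremum now taken over laws $\nu$ of densities, with $q_\nu$ the quantile function of $\nu$ and $\widetilde\alpha$ the induced penalty. \emph{Step 3 --- spectral weights are mixtures of AVaR.} A quantile function $\phi:=q_\nu$ is exactly a nondecreasing, nonnegative function on $[0,1)$ with $\int_0^1\phi=1$; using the identity $\mathrm{AVaR}_u(X)=\frac{1}{1-u}\int_u^1 q_X(t)\,\diff t$ and Fubini's theorem one gets $\int_0^1 q_X(t)\phi(t)\,\diff t=\int_{[0,1)}\mathrm{AVaR}_u(X)\,\gamma(\diff u)$ with $\gamma(\diff u)=(1-u)\,\d\phi(u)$, and an integration by parts shows that $\phi\mapsto\gamma$ is a bijection onto the probability measures on $[0,1)$ (the inverse being $\phi(t)=\int_{[0,t)}\gamma(\diff u)/(1-u)$). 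Setting $\beta(\gamma):=\widetilde\alpha(\nu)$ for the corresponding $\nu$ turns the previous display into \eqref{Kusuoka}, and $\beta\geq0$ follows from the normalization $\rho(0)\leq0$ together with $\mathrm{AVaR}_u(0)=0$.

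\textbf{Main obstacle.} The crux is Step 2: for each density $Z$ one must actually construct an equidistributed copy $\tilde Z$ comonotone with the prescribed $X$ (this is where atomlessness is essential, via a common uniform representing variable), and then argue that the supremum over all $Q\ll\P$ genuinely collapses to the supremum over laws $\nu$. The bookkeeping in Step 3 --- the change of variables $\phi\leftrightarrow\gamma$ and the endpoint behaviour near $0$ and $1$ --- is routine but needs some care. Passing from the coherent case (Kusuoka's original result) to the convex case with penalty $\beta$ costs nothing beyond keeping $\alpha$ in the picture, and the Fatou-property input of Step 1 is precisely what lets one avoid imposing any continuity hypothesis on $\rho$ by hand.
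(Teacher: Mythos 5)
The paper does not prove Theorem~\ref{thm:Kusuoka}; it is stated as a citation to Kusuoka's original work \cite{kusuoka2001law}, so there is no paper-internal proof to compare against. Your sketch is, however, a faithful account of the standard argument one finds in the literature (the coherent case in Kusuoka, the convex/penalized version essentially in F\"ollmer--Schied \cite{foellmer01} and Frittelli--Rosazza Gianin, with the Fatou-property input from Jouini--Schachermayer--Touzi \cite{Jou-Sch-Tou}). The three-step structure — robust dual representation, comonotone rearrangement via Hardy--Littlewood using law-invariance of the penalty, and then the change of variables $\phi\leftrightarrow\gamma$ turning the ``spectral density'' $q_\nu$ into an $\mathrm{AVaR}$ mixture — is exactly the canonical route, and you correctly flag the two places where the heavy lifting happens: the explicit construction of a comonotone equidistributed copy $\tilde Z$ (atomlessness is genuinely needed here, a hypothesis the paper leaves tacit), and the endpoint bookkeeping in the $\phi\leftrightarrow\gamma$ bijection.

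Two small remarks worth keeping in mind if you were to flesh this into a complete proof. First, on the change of variables: with $\phi(t)=\int_{[0,t)}\gamma(\diff u)/(1-u)$ one indeed gets $\int_0^1 q_X\,\phi=\int_{[0,1)}\mathrm{AVaR}_u(X)\,\gamma(\diff u)$ by Fubini, and the normalization $\int_0^1\phi=1$ matches $\gamma$ being a probability measure on $[0,1)$; note this forces $\gamma(\{1\})=0$, consistent with the half-open interval in \eqref{Kusuoka}, but one should check (as you anticipate) that any quantile function of a density — possibly with $q_\nu(0+)>0$, which corresponds to an atom $\gamma(\{0\})>0$, and with $q_\nu(1-)=\infty$ permitted — is produced. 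Second, your remark that the passage from the coherent to the convex case ``costs nothing'' is a little glib: Kusuoka's paper treats the coherent case, and while the convex extension is indeed by now standard, it is not literally in \cite{kusuoka2001law}; a careful write-up would cite the convex version explicitly (e.g.\ \cite[Theorem~4.62]{foellmer01}) rather than only Kusuoka. Neither point is a gap in the mathematics of your sketch, only in the attribution and in the level of detail at the places you yourself flagged as needing care.
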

In fact, this spectral representation suggests that the risk measure $\mathrm{AVaR}$ is the ``basic building block'' allowing to construct all convex law--invariant risk measures.
Thus, the idea will be to propose an approximation algorithm for $\mathrm{AVaR}$ that will be later bootstrapped to derive an algorithm for general law invariant risk measures.
This approach is also used in \citep{Ludo} for a very different approximation method.

\subsection{Approximation of average value at risk}
Let us first focus on estimating $\mathrm{AVaR}$.
For this purpose, recall (see e.g.\ \citep[Proposition 4.51]{foellmer01}) that for every $X \in \L^\infty$ and $u \in (0,1)$,  $\mathrm{AVaR}_u(X)$ takes the form 
\begin{equation*}
	\text{AVaR}_u(X)=\inf_{q\in \R}\Big(\frac{1}{1-u}\E[({}X-q)^+]+q\Big).    
\end{equation*}
In other words, $\mathrm{AVaR}_u(X)$ is nothing but the value of a stochastic optimization problem.
In most financial applications, the contingent claim $X$ whose risk is assessed is of the form $X = f(r,S)$ where $S = (S^1,\dots,  S^d)$ is a $d$--dimensional random vector of risk factors, and $r \in \mathcal{X}$, see e.g.\ \citep[Section 2.1]{McNeil15} for details. Note that here the space $\mathcal{X}$ can be infinite dimensional. A standard practice to approach the infinite dimensional case is to use neural networks for approximation, which leads to non-convex objective functions. Therefore, we allow $f$ to be non-convex with certain regularity conditions. 
A standard example arises when $X$ is the profit and loss (P$\&$L) of an investment strategy.
In this case, $r$ is the portfolio and the random vector $S$ represents (increments) of the stock prices. That is, $f(r,S):= \sum_{i=1}^dr_i(S_1^i-S^i_0)$ where $S^i_1$ and $S^i_0$ are the values of the stock $i$ at times $1$ and $0$, respectively.
Hence, we let $r$ be in a compact and convex set $A \subseteq \R^{d-1}$, and our goal will be to estimate the value of the (multi-dimensional) risk minimization problem
\begin{align}\notag
	\overline{\mathrm{AVaR}_u}(f) &:=\inf_{r \in A}\mathrm{AVaR}_u(f(r, S))\\
	\label{opt}
	&=\inf_{r \in A, q\in \R}\Big(\frac{1}{1-u}\E\Big[\big(f(r, S)-q\big)^+\Big]+q\Big).
\end{align}
A natural way to numerically solve such problems is by gradient descent. However, when the dataset is large, gradient descent usually does not perform well, since computing the gradient on the full dataset at each iteration is computationally expensive. 

Among others, one method that has been proposed to get around the high computational cost of gradient descent is the stochastic gradient descent (SGD) algorithm, which replaces the true gradient with an unbiased estimate calculated from a random subset of the data. A more  recent approach, called the Stochastic Langevin Gradient Descent, injects a random noise to an unbiased estimate of the gradient at each iteration of the SGD algorithm. Originally introduced by \citeauthor*{welling} \cite{welling} as a tool for Bayesian posterior sampling on large scale and high dimensional datasets, SGLD maintains the scalability property of SGD, and has a few advantages over the SGD: By adding a noise to SGD, SGLD navigates out of saddle points and local minima more easily \citep{bhardwaj}, outperforms SGD in terms of accuracy \citep{Deng}, and overcomes the curse of dimensionality \citep{accuracy}. 
Moreover, SGLD also applies to cases where the objective function is non-convex but sufficiently regular \cite{gelfand} \cite{Raginsky}. 

\medskip

We will apply the SGLD in the present context of estimation of $\mathrm{AVaR}$.
Recall that our goal is to solve the optimization problem given in equation (\ref{opt}). Let $z:=(r,q)$, and consider the (objective) function
$$
	\widetilde{L}(r,q) := \frac{1}{1-u}\E\Big[\big(f(r,S)-q \big)^+\Big]+q
$$
and, given a strictly positive constant $\gamma>0$, let 
\begin{equation}
\label{eq:object.L}
	L(r,q):=\widetilde{L}(r,q)+\frac{\gamma}{2}\|q\|^2, \quad \text{and }\overline{L}(r,q) := L(r,q)+\frac{\gamma}{2} \text{dist}^2(r,A)
\end{equation}
be the usual penalized objective function, where 
\begin{equation*}
\text{dist}^2(r,A):=\inf_{x \in A}\|r-x\|^2
\end{equation*}
denotes the squared distance from $r$ to the set $A$.
Since for $\gamma$ small we have
\begin{equation*}
 	\inf_{(r,q) \in \R^d}\overline{L}(r,q)  \approx \overline{\mathrm{AVaR}}_u(f),
 \end{equation*}
we will approximate the left hand side above using the SGLD algorithm, which consists in approximating its minimizer by the (support of the) invariant measure of the Markov chain $(Z^\lambda_{m,h})$ given by
\begin{equation}\label{eq:SGLD}
	Z_{m+1,h}^\lambda = Z_{m,h}^\lambda -\nabla \overline{L} (Z_{m,h}^\lambda)h+\sqrt{2\lambda^{-1}} \xi_m,
\end{equation}
where $(\xi_m)_{m\ge1}$ are independent Gaussian random variables.
In the practice  of financial risk management, the distribution of $S$ is typically unknown.
This is a well-studied issue in quantitative finance, refer for instance to \citep{Roboptim,cont2010robustness,Krae-Sch-Zaeh14} and the references therein.
In particular, $\nabla L$ cannot be directly computed.
It will be replaced by an unbiased estimator.
Following Monte--Carlo simulation ideas, we let $(S^1, \dots, S^P)$ be independent copies of $S$ and $(\widetilde W^1, \dots, \widetilde W^N)$ be independent Brownian motions, and we thus let
\begin{equation*}
	\widetilde{\ell}(z) := \frac{1}{P}\sum_{p=1}^P \frac{1}{1-u}(f(r, S^p) - q)^+ + q, \quad \ell(z):=\widetilde{\ell}(z)+\frac{\gamma}{2}\|q\|^2, \quad \text{and } \overline{\ell}(z):=\ell(z)+\text{dist}^2(r,A), \quad \text{with } z=(r,q) \in \R^d.
\end{equation*}

In the following, we will take $P=N$ for simplicity. Put
\begin{equation}
	\widetilde Z_{m+1,h}^{\lambda,n} = \widetilde Z_{m,h}^{\lambda,n} -\nabla L(\widetilde Z_{m,h}^{\lambda,n})h+\sqrt{2\lambda^{-1}} \Delta \widetilde W_h^n, \quad \text{with} \quad \Delta \widetilde W_h^n: = \widetilde W^n_{m+1}- \widetilde W^n_m,
\end{equation} 
\begin{equation}\label{Milstein}
	\widetilde Z_{m+1,h}^{\prime\lambda,n} = \widetilde Z_{m,h}^{\prime\lambda,n} -\nabla \ell(\widetilde Z_{m,h}^{\prime\lambda,n})h+\sqrt{2\lambda^{-1}} \Delta \widetilde W_h^n, \quad \text{with} \quad \Delta \widetilde W_h^n: = \widetilde W^n_{m+1}- \widetilde W^n_m,
\end{equation} 
 and 
\begin{equation}\label{Milstein_overline}
	\overline Z_{m+1,h}^{\lambda,n} = \overline Z_{m,h}^{\lambda,n} -\nabla \overline{\ell}(\overline Z_{m,h}^{\lambda,n})h+\sqrt{2\lambda^{-1}} \Delta \overline W_h^n, \quad \text{with} \quad \Delta \overline W_h^n: = \overline W^n_{m+1}- \overline W^n_m.
\end{equation} 
Hence we will show that 
\begin{equation}
\label{eq:def.avar.tilde}
	\widetilde{\mathrm{AVaR}}_u(f):=\frac{1}{N}\sum_{n=1}^N \overline \ell(\widetilde{Z}^{\prime \lambda,n}_{M,h})
\end{equation}
approximates $\overline{\mathrm{AVaR}}_u(f)$. Note that the optimal portfolio $r$ can be easily recovered. It is simply the last $d-1$ coordinates of $\overline{Z}^{\lambda,n}_{M,h}$.
Similarly, the value-at-risk can be obtained from the Markov chain $\overline{Z}^{\lambda,n}_{M,h}$.
See \Cref{rem.VaR} for details.
Let us now formulate the assumptions we make on $f$ and the random vector $S$.
\begin{assumption}
\label{ass.f.S}
	The random variable $S$ takes values in $\R^d$ and
	the function $f:\R^d\times R^{d-1}\to \R$ is Borel measurable, and they satisfy\\
	$(i)$ $S$ has finite fourth moment.\\
	$(ii)$ The function $(r,s)\mapsto f(r,s)$ Lipschitz--continuous and continuously differentiable.\\
	$(iii)$ $\inf_r\E[f(r,S)]>0$.\\
	$(iv)$ The random variable $\nabla_rf(r,S)$ is bounded, uniformly in $S$, and $\nabla_s f (r,\cdot)$ is Lipschitz, uniformly in $r$ \label{bounded_gradient_f}\\
	$(v)$ Consider the function $\kappa$ defined as
	\begin{equation*}
		\kappa(u) := \inf\Big\{-\sqrt{2\lambda}\frac{(z-z')\cdot(\nabla L(z') - \nabla L(z))}{\|z - z'\|}, \,\, z,z' \in \R^{d}: \|z -z'\| = u \Big\}.
	\end{equation*}
	It holds
	\begin{equation*}
		\liminf_{u\to \infty}\kappa(u)>0 \quad \text{and}\quad \int_0^1u\kappa(u)^-\diff u<\infty.
	\end{equation*}
\end{assumption}
Let us briefly comment on these conditions before stating the result.
The integrability, regularity, and lower boundedness conditions $(i)-(iii)$ allow to ensure that the problem is well-posed.
The boundedness condition $(iv)$ is assumed mostly to simplify the exposition.
Most of our statements will remain true if it is replaced by a suitable integrability condition.
We introduce the more involved condition $(v)$ to make for the possible lack of convexity of the objective function $L$.
This condition is by now standard when employing coupling by reflection techniques to prove contractivity of diffusion semigroups.
We refer for instance to \citeauthor*{EberlePTRF} \cite{EberlePTRF,EberleCRAS} or the earlier work of \citeauthor*{Chen-Li89} \cite{Chen-Li89}.
Note, for instance, that this condition is automatically satisfied if $f$ is convex (since in this case $L$ is strongly convex) or when $L$ is strictly convex outside a given ball (see \cite[Example 1]{EberlePTRF}).

\medskip

The following is the first main result of this work:
\begin{theorem}
\label{thm:avar}
	Let Assumptions 2.2 hold. Let $t,M,h>0$ be such that $h=\frac{t}{M^2}$. 
	For all $t,\lambda>0,0<\gamma<1,$ and  $M,N\in \mathbb{N}^\star$, we have
	\begin{align}
		&\E\bigg[\Big| \frac{1}{N}\sum_{n=1}^N \overline{\ell}(\widetilde{Z}_{M,h}^{\prime\lambda,n})-\overline{\text{AVaR}}_u(f)\Big|^2\bigg]
		\le C^1_{(u,t,\lambda,t)}\frac{1}{N}+C^2_{(u,t,\lambda)}\gamma^2
		+C^3_{(u,t,\lambda)}h^2+C^4_{(u,\lambda)}e^{- tC^5_{(\lambda)}}+C^6_{(u)}\frac{1}{\lambda^2},
	\end{align}
	where the constants are given in the appendix.
\end{theorem}

Theorem \ref{thm:avar} provides a non-asymptotic rate for the convergence of the estimator $\frac{1}{N}\sum_{n=1}^N \overline{\ell}(\widetilde{Z}^{\prime\lambda,n}_{M,h})$ to the (optimized) average value at risk.
Such a rate is crucial in applications since it gives a precise order of magnitude for the choice of the parameters $M, N, \gamma$ and $\lambda$ needed to achieve a desired order of accuracy.
Moreover, the rate is independent of the dimension $d$, implying in particular that the rate is not made worst when increasing the size of the portfolio $S = (S^1,\dots, S^d)$ (or in general the number of risk factors).
Furthermore, observe that this estimator $\widetilde{\mathrm{AVaR}}_u(f)$ is rather easy to simulate: one only needs to simulate $N$ independent Gaussian random variables, for each of them simulate the iterative scheme \eqref{Milstein} and compute the empirical average of the outcomes. We provide numerical results on the estimation of AVaR in \Cref{simulation} below.

\begin{remark}
	Observe that the method developed here can also allow (with minor changes) to simulate the value function of utility maximization problems of the form
	\begin{equation*}
		\sup_{r\in A}\E^\mu[U(f(r,S))]
	\end{equation*}
	where $U$ is a concave utility function and $\E^\mu$ the expectation when $S\sim \mu$, or even of the robust utility maximization problem
	\begin{equation*}
		\sup_{r\in A}\inf_{\mu \in \mathcal{P}}\E^\mu[U(f(r,S))]
	\end{equation*}
	where $\mathcal{P}$ is the set of possible distributions of $S$.
	In the latter case, one will need to compute (or find an appropriate unbiased estimator of) $\nabla L$ with
	\begin{equation*}
		L(r) := \inf_{\mu \in \mathcal{P}}\E^\mu[U(f(r,S))].
	\end{equation*}
	This is easily done for instance when $\mathcal{P}$ is a ball with respect to the Wasserstein metric around a given distribution $\mu_0$, see e.g. \cite{Roboptim}.
\end{remark}

\subsection{Approximation of general convex risk measures}
\label{subsec.appro.gen.rm}
Let us return to the problem of approximating general law-invariant convex risk measures. 
In this context (as in the case of $\mathrm{AVaR}$) our goal is to simulate the optimized risk measure
\begin{equation*}
	\overline{\rho}(f) := \inf_{r \in A} \rho(f(r,S)).
\end{equation*}
To that end, let us recall a notion of regularity of risk measures introduced in \citep{Ludo} that will be needed to derive an explicit non-asymptotic convergence rate.
Recall that a random variable $X^*$ is said to follow the Pareto distribution with scale parameter $x>0$ and shape parameter $q>0$ if
\begin{equation*}
	P(X\ge t) = \begin{cases}
		(x/t)^q \text{ if } t\ge x\\
		1 \text{ if } t<x.
	\end{cases}
\end{equation*}
\begin{definition}\citep{Ludo}
	Let $q \in (1,\infty)$, and let $X^*$ follow Pareto distribution with scale parameter 1 and shape parameter $q$. A convex risk measure $\rho: \L^{\infty} \to \R$ is said to be $q$-regular if it satisfies
	\begin{equation*}
		\sup_{n \in \N}\rho(X^*\wedge n)<\infty.
	\end{equation*}
\end{definition}
We refer to \citep{Ludo} for a discussion on this notion of regularity, but note for instance that $\mathrm{AVaR}$ is $q$--regular for all $q>1$ and that this notion of regularity is slightly stronger than the well-known Fatou property and the Lebesgue property often assumed for risk measures, see e.g.\ \citeauthor*{foellmer01} \cite{foellmer01}.
Moreover,
one consequence of $q$--regularity is the following slight refinement of Kusuoka's representation: The risk measure $\rho$ satisfies
\begin{equation}\label{bounded}
	\rho(f(r,S))=\sup_{\gamma\in \mathcal{M}:s.t. \beta(\gamma)\leq b}\left(\int_{[0,1)}\text{AVaR}_u(f(r,S))\gamma(du)-\beta(\gamma)\right).
\end{equation}
for some $b>0$, see \citep[Lemma 4.4]{Ludo} for details. 
Thus, the estimator we consider for $\overline{\rho}$ is given by
\begin{equation}\label{general}
	\widetilde{\rho}^\delta(f) :=\esssup_{\gamma \in \mathcal{M}: \beta(\gamma)\leq b} \left(\int_{[0,\delta)}\widetilde{\text{AVaR}_u}(f)\gamma(du)-\beta(\gamma)\right)
\end{equation}
for some $\delta\in (0,1)$, and where $\widetilde{\text{AVaR}_u}(f)$ is the estimator of $\overline{\mathrm{AVaR}_u}(f)$ given by \eqref{eq:def.avar.tilde}, which implicitly depends on $u$ through the objective functions $L$ and $\overline{\ell}$.
The following theorem gives a convergence rate for the approximation of the general law-invariant convex risk measure $\bar \rho(f)$ by $\widetilde{\rho}^\delta(f)$.
\begin{theorem} 
\label{thm:general.rm}
	Let $\rho$ be a $q$--regular convex risk measure with $q>1$.
	Let $f$ be bounded and satisfy the assumptions of \Cref{thm:avar}.
	Let $h=\frac{t}{M^2}$. For all $t,\lambda>0,0<\gamma<1,$ and  $M,N\in \mathbb{N}^\star$, we have
	\begin{align*}
		&\E\Big[|\bar \rho(f)-\widetilde{\rho}^{\delta}(f)|^2] 
		\leq C(1-\delta)^{1/q}+C^7_{(\delta,t,\lambda)}\frac{1}{N}+C^2_{(\delta,t,\lambda)}\gamma^2
		+C^3_{(\delta,t,\lambda)}h^2+C^4_{(\delta,\lambda)}e^{- tC^5_{(\lambda)}}+C^6_{(\delta)}\frac{1}{\lambda^2},
	\end{align*}
	where $C^7_{(\delta,t,\lambda,t)}$  is given in the Appendix, and constants $C^2_{(\delta,t,\lambda)}, C^3_{(\delta,t,\lambda)},C^4_{(\delta,\lambda)}$ and $C^6_{(\delta)}$ correspond to those given in the Appendix, with $u$ replaced by $\delta$. 
	\end{theorem}

\subsection{Numerical results on AVaR}\label{simulation}
Let us complement the above theoretical guarantees with empirical experiments\footnote{Code available at https://github.com/jiaruic/sgld\_risk\_measures}.
We first focus on the approximation of the average value at risk and the value at risk with respect to the time evolution of the Markov chain in the SGLD algorithm.
Thus, for the numerical computations, we set 
$$A=[0,1]^d, \quad \lambda=10^8,\quad \gamma=10^{-8},\quad h=10^{-4},\quad \text{and}\quad  u=0.95.$$
We will consider two cases in our experiments. 
In the first case we assume the underlying distribution to be known and use Monte--Carlo simulation, and in the second case we use real historical stock price data.
\subsubsection{Monte Carlo simulation}
For the Monte Carlo experiments, we set $N=5000$.
\Cref{avar_1d} shows the convergence of AVaR in the 1 dimensional case with $f(r, S)=S$, where $S$ is sampled from a Gaussian distributions. 
\Cref{avar_error} shows the estimation error, $\widetilde{\text{AVaR}}_u-\overline{\text{AVaR}}_u$, where $\overline{\text{AVaR}}_u$ is the theoretical average value at risk for 1-dimensional Gaussian distributions given by
\begin{equation}
\overline{\text{AVaR}}_u=\mu+\sigma\frac{\phi(\Phi^{-1}(u))}{1-u},
\end{equation}
where $\phi$ and $\Phi$ are respectively, the  PDF and the CDF of a standard Gaussian distribution. 
\begin{figure}[h]
\centering
\begin{subfigure}{0.45\textwidth}
  \centering
  \includegraphics[width=1\linewidth]{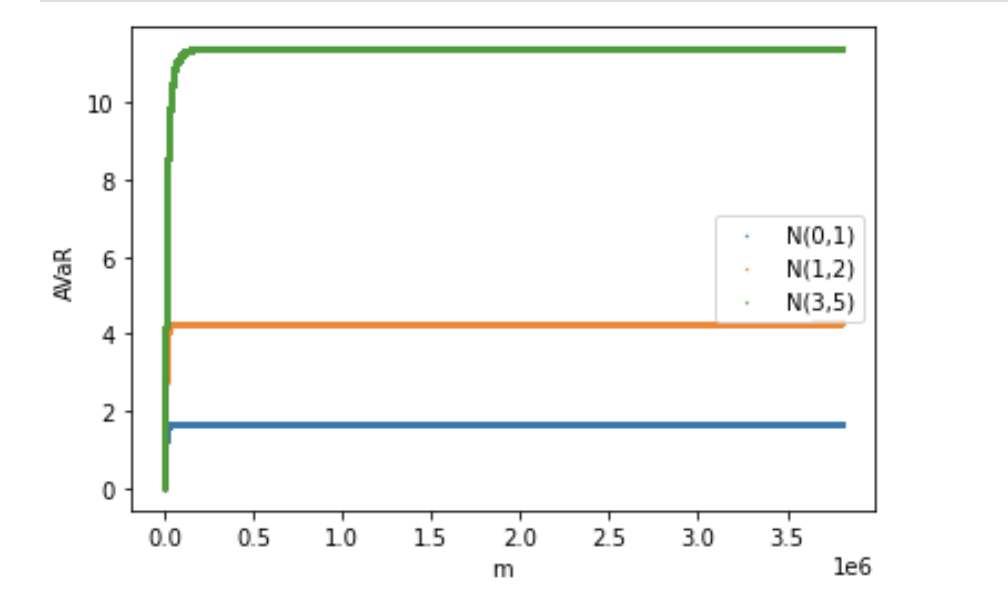}
  \caption{Paths of AVaR}
  \label{avar_1d}
\end{subfigure}%
\hspace{28pt}
\begin{subfigure}{0.45\textwidth}
  \centering
  \includegraphics[width=1\linewidth]{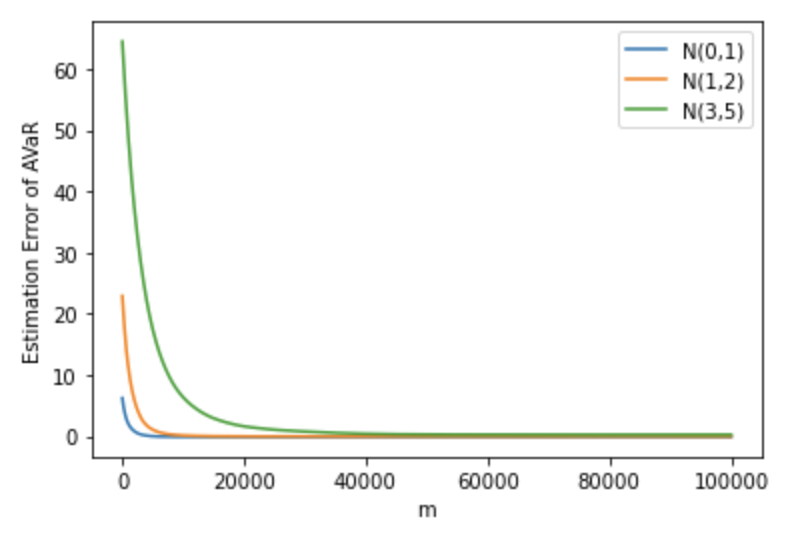}
  \caption{Estimation Error of AVaR}
  \label{avar_error}
\end{subfigure}
\caption{Numerical Results for 1-d Gaussian random variables}
\end{figure}
For the multi-dimensional case, we take the function $$f(r,S)= \sum_{i=1}^d \frac{e^{r_i}}{\sum_{j=1}^d e^{r_j}}S_i, \text{ for } i=1,\cdots,d. $$
\Cref{var_2d} and \Cref{avar_2d} show the convergence of VaR and AVaR in the 2-dimensional case, where $S^1$ is sampled from $\mathcal{N}(1,4)$ and $S^2$ is sampled from $\mathcal{N}(0,1)$. 
\begin{figure}[h]
\centering
\begin{subfigure}{0.45\textwidth}
  \centering
  \includegraphics[width=1\linewidth]{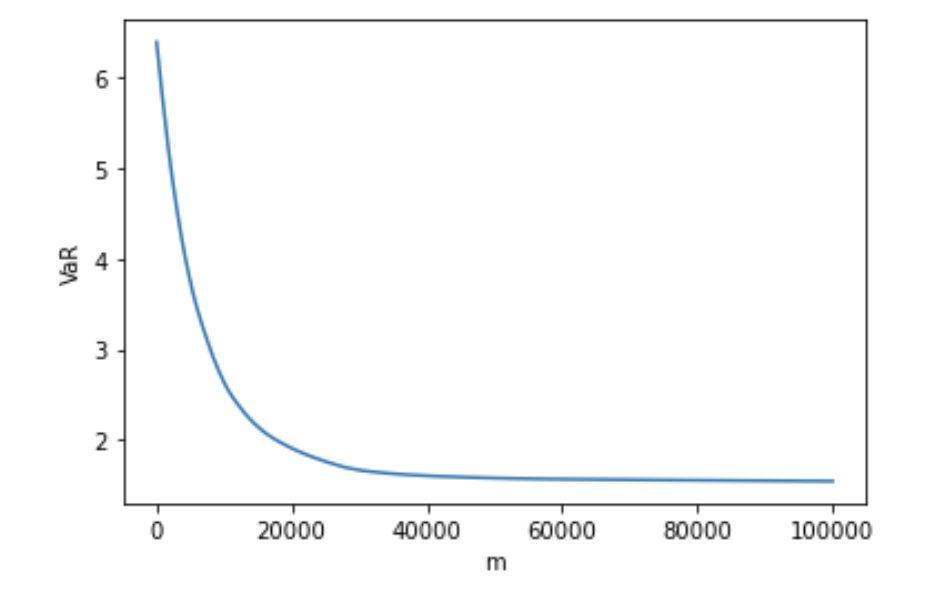}
  \caption{Path of VaR}
  \label{var_2d}
\end{subfigure}%
\hspace{28pt}
\begin{subfigure}{0.45\textwidth}
  \centering
  \includegraphics[width=1\linewidth]{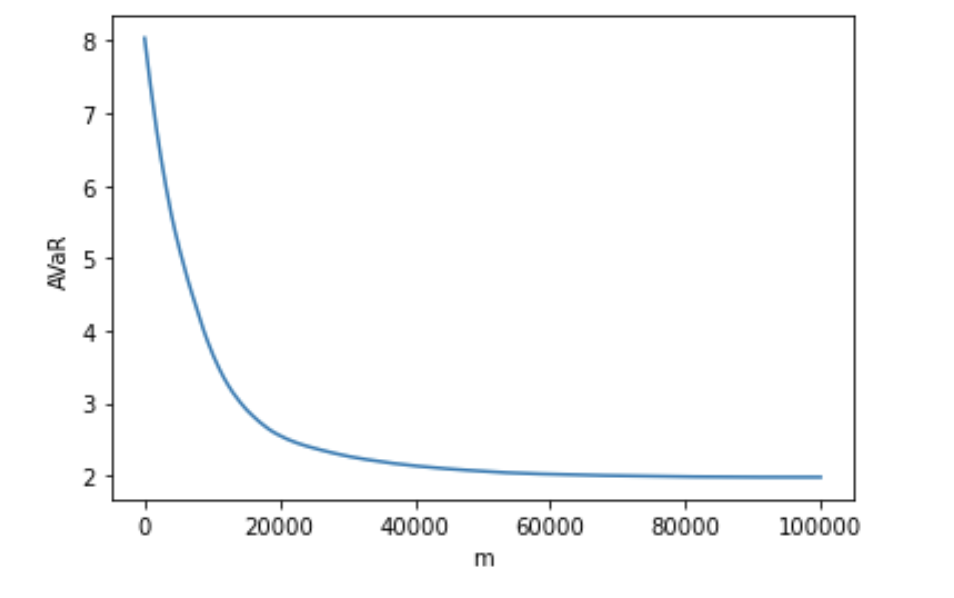}
  \caption{Path of AVaR}
  \label{avar_2d}
\end{subfigure}
\caption{Portfolio of two Gaussian random variables}
\end{figure}

\subsubsection{Numerical results with real data}
In this subsection, we compute AVaR for a portfolio of 106 stocks using real aggregated stock prices over 15-minute time intervals from January 2, 2015 to August 31, 2015.  Among 128 NASDAQ stocks that are "sufficiently liquid", we remove the ones with missing values, and use the remaining 106 stocks. For a detailed description of the data used and for a definition of "sufficiently liquid", please refer to Section 3.2 of \citeauthor*{data} \cite{data}. We use changes in stock prices instead of stock prices themselves, because stock prices are highly dependent. We present paths of the estimated optimized VaR and AVaR of the portfolio of 106 stocks in Figures \ref{var_data} and \ref{avar_data} respectively.

\begin{figure}[h]
\centering
\begin{subfigure}{0.45\textwidth}
  \centering
  \includegraphics[width=1\linewidth]{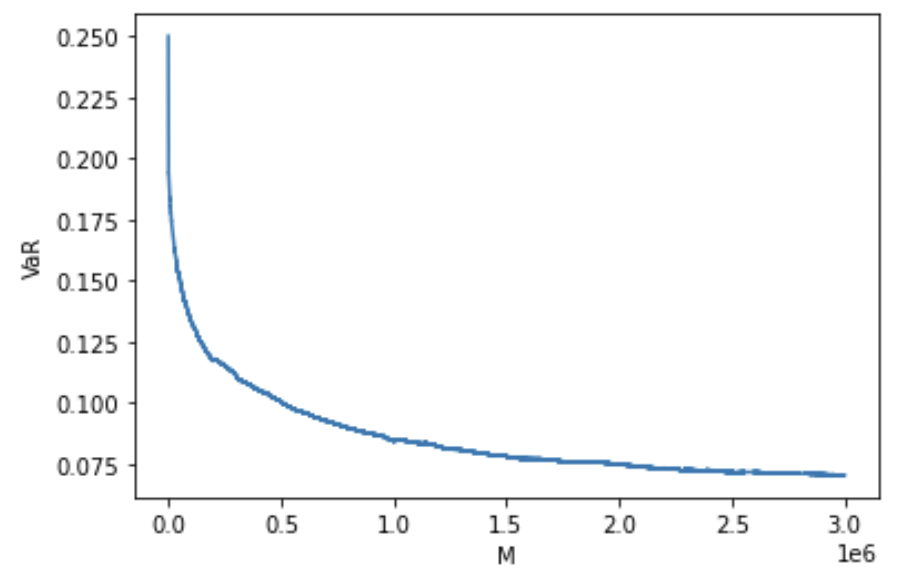}
  \caption{Paths of VaR}
  \label{var_data}
\end{subfigure}%
\hspace{28pt}
\begin{subfigure}{0.45\textwidth}
  \centering
  \includegraphics[width=1\linewidth]{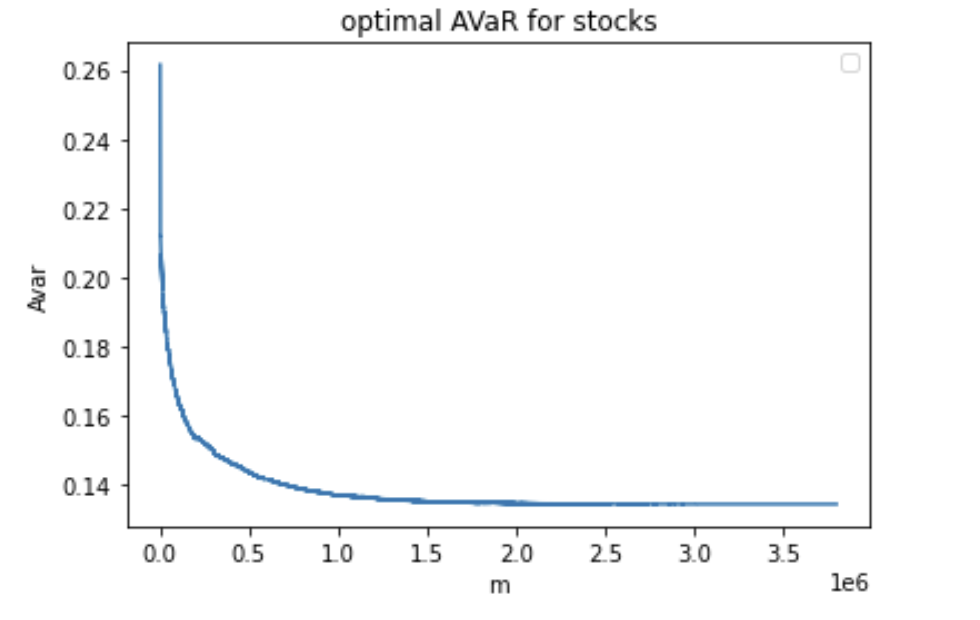}
  \caption{Paths of AVaR}
  \label{avar_data}
\end{subfigure}
\caption{Computations on real price increments of portfolio of 106 stocks. }
\end{figure}

In addition, our approach can also be easily applied to a fixed portfolio of stocks. We take 20 stocks from the 106 stocks described above, and consider a fixed portfolio of equal weights, i.e., $r_i=\frac{1}{20}$ for each $i$. We present paths the estimated  AVaR in Figure \ref{avar_fixed_dist}.
\begin{figure}[h]
\centering
  \includegraphics[width=100mm,scale=0.8]{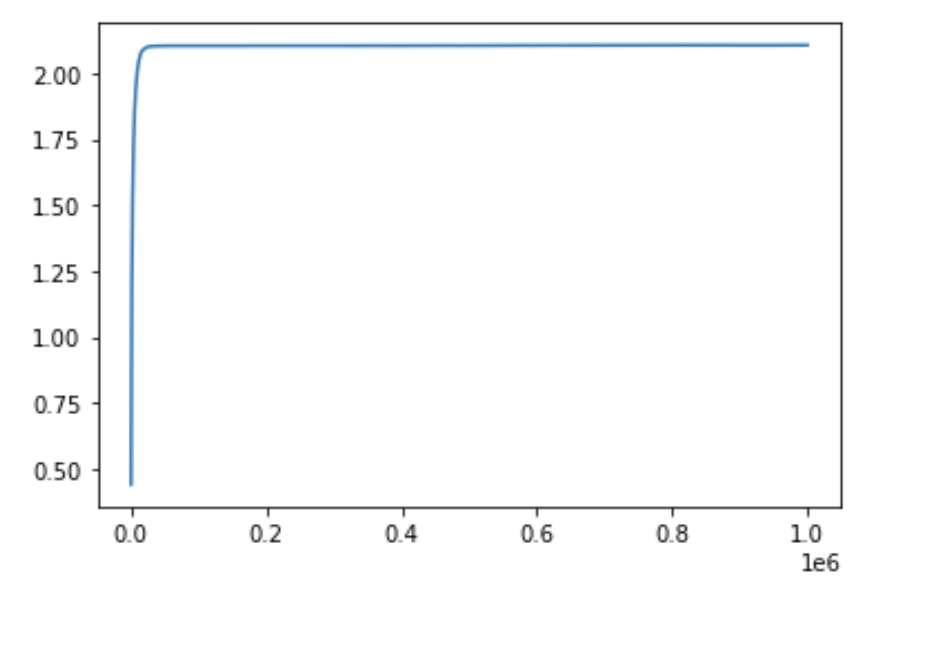}
  \caption{Path of AVaR for a fixed portfolio}
  \label{avar_fixed_dist}
\end{figure}

\subsection{Numerical results on general risk measures}
In order to simulate general risk measures one needs to specify the penalty function $\beta$, or alternatively the precise form of $\rho$ since $\beta$ is given by \cite{follmer}
\begin{equation*}
	\beta(\gamma) = \sup_{X \in \mathcal{A}_\rho}\int_{[0,1]}\mathrm{AVaR}_u(X)\gamma(du),\quad \text{with}\quad \mathcal{A}_\rho:= \{X \in \L^\infty: \rho(X + m)\le 0\}.
\end{equation*}
In general, the simulation of $\tilde\rho^\delta(f)$ as given in \eqref{general} will probably require introducing neural networks since it is the value of an infinite dimensional optimization problem.
This will be addressed in future research.
We will focus here on a case where the problem can be simplified.

In fact, denote by $\partial_\mu\beta$ the so--called linear functional derivative of $\beta$.
It is defined as the function $\partial_\mu\beta: \mathcal{M}([0,1])\times [0,1]\to \R$ such that
\begin{equation*}
	\beta(\mu') - \beta(\mu) = \int_0^1\int_0^1\partial_\mu\beta((1-\lambda)\mu + \lambda\mu',x)(\mu' - \mu)(\d x)\d \lambda.
\end{equation*}
Up to an additive constant, there exists a unique such derivative $\partial_\mu\beta$, see e.g. \cite{MR3752669}.
We have the following:
\begin{proposition}
	Let the assumptions of Theorem \ref{thm:general.rm} hold and assume that $\beta$ admits a second order linear functional derivative that is jointly continuous and such that
	\begin{equation*}
		\sup_{\eta_1, \eta_2 \in \L^2}\E\Big[\sup_{\mu \in \mathcal{M}([0,1])}|\partial_\mu\beta(\mu,\eta_1)| + \sup_{\mu \in \mathcal{M}([0,1])}|\partial_\mu^2\beta(\mu, \eta_1,\eta_2)| \Big]\leq K<\infty.
	\end{equation*}
	Then, it holds
	\begin{align*}
		&\E\bigg[\Big|\inf_{(x_i)_{i=1,\dots,J}\subset [0,1]}F\Big(\frac1J\sum_{i=1}^J\delta_{x_i}\Big) - \rho(f)  \Big|^2\bigg] \\
  \le& \frac{4K^2}{J^2} + C(1-\delta)^{1/q}+C(1-\delta)^{1/q}+C^7_{(\delta,t,\lambda)}\frac{1}{N}+C^2_{(\delta,t,\lambda)}\gamma^2
		+C^3_{(\delta,t,\lambda)}h^2+C^4_{(\delta,\lambda)}e^{- tC^5_{(\lambda)}}+C^6_{(\delta)}\frac{1}{\lambda^2}, 
	\end{align*}
	with $F(\mu):= \int_0^1\widetilde{\mathrm{AVaR}}_u(f)\mu (\d u) - \beta(\mu)$ (recall Equation \eqref{eq:def.avar.tilde}).
\end{proposition}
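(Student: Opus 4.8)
The plan is to regard the finite--dimensional optimisation in the statement as a quadrature approximation of the infinite--dimensional optimisation defining $\widetilde{\rho}^{\delta}(f)$, and to pay for it using the regularity of $\beta$ carried by its second order linear functional derivative. All the randomness enters only through $\widetilde{\mathrm{AVaR}}_u(f)$, so I fix a realisation and treat $F$ as a deterministic functional on $\mathcal{M}([0,1])$. Writing $A$ for the finite--dimensional optimal value in the statement, the triangle inequality together with $(a+b)^2\le 2a^2+2b^2$ reduces the claim to bounding $\E\big[|A-\widetilde{\rho}^{\delta}(f)|^2\big]$ and $\E\big[|\widetilde{\rho}^{\delta}(f)-\overline{\rho}(f)|^2\big]$ separately, the second being exactly \Cref{thm:general.rm}. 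For the first I would establish the \emph{pathwise} estimate $|A-\widetilde{\rho}^{\delta}(f)|\le cK/J$, which on squaring and taking expectations produces the term $4K^2/J^2$.

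To obtain the pathwise estimate, note first that $\widetilde{\rho}^{\delta}(f)$ is attained at some $\mu^{\star}$: the admissible set $\{\gamma\in\mathcal{M}([0,\delta)):\beta(\gamma)\le b\}$ is weak--$*$ compact (lower semicontinuity of $\beta$ gives closedness, and $\mathcal{M}([0,1])$ is compact) and $\gamma\mapsto\int\widetilde{\mathrm{AVaR}}_u(f)\,\gamma(\d u)$ is weak--$*$ continuous because $u\mapsto\widetilde{\mathrm{AVaR}}_u(f)$ is continuous on $[0,\delta]$. One of the two inequalities between $A$ and $\widetilde{\rho}^{\delta}(f)$ is immediate since every admissible empirical measure is a competitor in the problem defining $\widetilde{\rho}^{\delta}(f)$. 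For the other I would exhibit atoms $x_1,\dots,x_J\in\mathrm{supp}\,\mu^{\star}$ for which $\hat{\mu}_J:=\tfrac1J\sum_{i=1}^J\delta_{x_i}$ satisfies $F(\hat{\mu}_J)\ge F(\mu^{\star})-cK/J$. Expanding $F$ along the segment from $\mu^{\star}$ to $\hat{\mu}_J$ to second order in the linear functional derivative gives
\begin{equation*}
	F(\hat{\mu}_J)-F(\mu^{\star})=\int_{[0,\delta)}\partial_\mu F(\mu^{\star},x)\,(\hat{\mu}_J-\mu^{\star})(\d x)\;-\;\mathcal R_J,
\end{equation*}
with $\mathcal R_J:=\int_0^1(1-s)\iint\partial_\mu^2\beta\big(\mu_s,x,y\big)\,(\hat{\mu}_J-\mu^{\star})(\d x)\,(\hat{\mu}_J-\mu^{\star})(\d y)\,\d s$ and $\mu_s:=(1-s)\mu^{\star}+s\hat{\mu}_J$; the affine term $\gamma\mapsto\int\widetilde{\mathrm{AVaR}}_u(f)\gamma(\d u)$ drops out of $\mathcal R_J$, which is precisely why only $K$ — the bound on $\partial_\mu^2\beta$ — will appear. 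The first order term vanishes exactly: by the first order optimality condition for the problem defining $\widetilde{\rho}^{\delta}(f)$, the function $x\mapsto\partial_\mu F(\mu^{\star},x)$ attains its extremum over $[0,\delta)$ at every point of $\mathrm{supp}\,\mu^{\star}$ and, being continuous there (joint continuity of $\partial_\mu\beta$ and continuity of $u\mapsto\widetilde{\mathrm{AVaR}}_u(f)$), is constant on $\mathrm{supp}\,\mu^{\star}$, so that $x_i\in\mathrm{supp}\,\mu^{\star}$ forces $\int\partial_\mu F(\mu^{\star},x)(\hat{\mu}_J-\mu^{\star})(\d x)=0$. I would take the $x_i$ to be a quantile quantisation of $\mu^{\star}$ (at levels $(i-\tfrac12)/J$), which lies in $\mathrm{supp}\,\mu^{\star}$ and satisfies $\cW_1(\hat{\mu}_J,\mu^{\star})\le 1/J$.

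The remaining and genuinely delicate step is the bound $|\mathcal R_J|\le cK/J$, which I expect to be the main obstacle. The crude estimate $|\mathcal R_J|\le\tfrac12 K\|\hat{\mu}_J-\mu^{\star}\|_{\mathrm{TV}}^2$ is worthless, since $\hat{\mu}_J$ and $\mu^{\star}$ need not be close in total variation; instead one integrates one variable at a time. For fixed $y$ and $s$, an optimal $\cW_1$--coupling of $\hat{\mu}_J$ and $\mu^{\star}$ yields $\big|\int\partial_\mu^2\beta(\mu_s,x,y)(\hat{\mu}_J-\mu^{\star})(\d x)\big|\le\omega\big(\cW_1(\hat{\mu}_J,\mu^{\star})\big)$, where $\omega$ is a modulus of continuity of $\partial_\mu^2\beta$ (uniform, because $\partial_\mu^2\beta$ is jointly continuous on the compact set $\mathcal{M}([0,1])\times[0,1]^2$); integrating in $y$ costs at most a further factor $\|\hat{\mu}_J-\mu^{\star}\|_{\mathrm{TV}}\le2$, and the dependence of $\mu_s$ on $\hat{\mu}_J$ is absorbed the same way since $\cW_1(\mu_s,\mu^{\star})\le\cW_1(\hat{\mu}_J,\mu^{\star})$. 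This gives $|\mathcal R_J|\le cK\,\omega(1/J)$; upgrading $\omega(1/J)$ to $O(1/J)$ relies on the Lipschitz--type control of $\partial_\mu^2\beta$ afforded by the hypotheses, and this is where the bulk of the care is needed. To close the argument, the constraint $\beta\le b$ is handled by a routine Lagrangian step — if it is inactive at $\mu^{\star}$ one only needs $\beta(\hat{\mu}_J)\le b$, which holds for $J$ large by continuity of $\beta$ near $\mu^{\star}$; if active, the same expansion is run on the Lagrangian $\gamma\mapsto\int_{[0,\delta)}\widetilde{\mathrm{AVaR}}_u(f)\,\gamma(\d u)-(1+\kappa)\beta(\gamma)$, the multiplier $\kappa\ge0$ only affecting constants — the passage from $\mathcal{M}([0,1])$ to $\mathcal{M}([0,\delta))$ is cosmetic, and measurability of $A$ in the SGLD randomness follows exactly as for $\widetilde{\rho}^{\delta}(f)$ (cf.\ the discussion after \eqref{general}). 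Combining the pathwise bound with \Cref{thm:general.rm} and absorbing numerical factors and the $\delta^{1/q}$ prefactors into the constants then yields the stated inequality.
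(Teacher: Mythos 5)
The paper gives no proof here; the proposition is declared ``a direct consequence of Theorem \ref{thm:general.rm} and \cite[Theorem 2.4]{Hu-Siska-Szpruch}'' and the proof is explicitly omitted. So the paper's argument is exactly your triangle--inequality split $\E|A-\widetilde\rho^\delta(f)|^2+\E|\widetilde\rho^\delta(f)-\bar\rho(f)|^2$, with the first term delivered by the cited particle--approximation theorem (yielding $4K^2/J^2$) and the second by Theorem \ref{thm:general.rm}. Your handling of the second term coincides with the paper's. Where you diverge is that instead of quoting the cited result you attempt to reprove the particle--approximation bound via a \emph{deterministic} quantile quantization of the maximizer $\mu^\star$ plus a second--order functional Taylor expansion, killing the linear term with the first--order (KKT) optimality condition. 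That is a genuinely different mechanism from the one underlying particle--approximation theorems of this type, which typically draw $x_1,\dots,x_J$ i.i.d.\ from $\mu^\star$, kill the linear term in expectation (no stationarity condition needed), and extract the $1/J$ in the quadratic remainder from the $O(1/J)$ \emph{diagonal} contribution to $\E\big[(\hat\mu_J-\mu^\star)^{\otimes 2}\big]$, controlled directly by the sup--norm bound $K$ on $\partial_\mu^2\beta$.

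The difficulty is precisely the step you flag as needing ``the bulk of the care'': upgrading $|\mathcal R_J|\le cK\,\omega(1/J)$ to $O(K/J)$. With a deterministic quantizer there is no stochastic cancellation, so the only handle on $\iint\partial_\mu^2\beta(\mu_s,\cdot,\cdot)\,(\hat\mu_J-\mu^\star)^{\otimes 2}$ is the smallness of $\cW_1(\hat\mu_J,\mu^\star)$, and converting $\cW_1\le 1/J$ into a rate for $\mathcal R_J$ requires Lipschitz (or at least H\"older) control of $\partial_\mu^2\beta$ in the spatial variable. The stated hypotheses do not provide this: joint continuity plus the $L^2$--type bound by $K$ yield only a modulus $\omega$ with $\omega(1/J)=o(1)$, not $O(1/J)$. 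As written your argument therefore does not reach the claimed rate. If you strengthen the hypotheses to $\partial_\mu\beta(\mu,\cdot)$ uniformly Lipschitz on $[0,1]$, then a \emph{first}--order functional expansion together with $\cW_1(\hat\mu_J,\mu^\star)\le 1/J$ already gives the $O(1/J)$ bound — and you can drop both the second--order derivative and the KKT step — but that proves a different proposition from the one stated, and the resulting constant is the Lipschitz modulus rather than $K$. Two smaller points: the ``$\inf$'' over $(x_i)$ in the statement is evidently a typo for ``$\sup$'' (compare the entropic--VaR computation that follows), which you correctly read; and the mismatch between $\int_0^1$ in $F$ and $\int_{[0,\delta)}$ in $\widetilde\rho^\delta$ is not quite cosmetic, since $\widetilde{\mathrm{AVaR}}_u(f)$ is only controlled for $u$ bounded away from $1$, so the pathwise comparison to $\widetilde\rho^\delta(f)$ needs the $x_i$ restricted to $[0,\delta)$.
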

This is a direct consequence of Theorem \ref{thm:general.rm} and \cite[Theorem 2.4]{Hu-Siska-Szpruch}. The proof is omitted.

\medskip

As an illustrative example, we consider the so--called entropic value-at-risk introduced by \citeauthor*{Ahmadi-Javid} \cite{Ahmadi-Javid} and studied e.g.\ by \citeauthor*{Pichler-Schlotter20} \cite{Pichler-Schlotter20} and \citeauthor*{Follmer-Knispel11} \cite{Follmer-Knispel11} in connection to large portfolio asymptotics.
This is a risk measure based on the R\'enyi entropy given by
\begin{equation*}
	\rho_u(X) = \sup\Big\{\E[ZX]: Z\ge 0, \,\, \E[Z] = 1,\,\, H_q(Z)\le \log\frac{1}{1-u} \Big\}
\end{equation*}
with $H_q(Z):=\frac{1}{q-1}\log \E Z^q$ for $q\in \R^*_+\setminus\{1\}$.
The associated penalty function takes the form
\begin{equation*}
	\beta(\gamma) := \begin{cases}
		0\text{ if } \int_0^1\sigma_\gamma(x)^q\d x\le \big(\frac{1}{1-u}\big)^{q-1}\\
		+\infty \text{ else}
	\end{cases}, \quad \text{with}\quad \sigma_\gamma(x):= \int_0^x\frac{1}{1-v}\gamma(\d v).
\end{equation*}

To numerically compute entropic value-at-risk, for a large $k$, we simulate
$$\sup_{(x_i)_{i=1,\dots,J}\subset [0,1]}F\Big(\frac1J\sum_{i=1}^J\delta_{x_j}\Big):= \sup_{(x_i)_{i=1,\dots,J}\subset [0,1]} \bigg(\frac1J\sum_{i=1}^J\widetilde{\mathrm{AVaR}}_{x_i}(f) - k\Big\{\int_0^1\Big(\frac1J\sum_{i=1}^N\frac{1}{1-x_i}1_{x_j\le x}\Big)^q\d x - (\frac{1}{1-u})^{q-1} \Big\}^+ \bigg).$$
For Monte Carlo simulation, we set $J=5000$, $k=10^{18}$, $q=1.00001$, and estimate the supremum over $(x_i)_{i=1,\dots,J}\subset [0,1]$ by the maximum of 5000 random partitions, each consisting of $J$ points, of the interval $[0,1]$. \Cref{evar} shows the convergence of the entropic value at risk in the 1 dimensional case with $f(r,S)=S$, where $S$ is sampled from $\mathcal{N}(1,2)$. \Cref{evar_error} shows the estimation error compared to the theoretical entropic value-at-risk for  $\mathcal{N}(1,2)$ given by $\rho_u(X) = 1 + \sqrt{-2\log((1-u)2)}$.

\begin{figure}[h]
\centering
\begin{subfigure}{0.45\textwidth}
  \centering
  \includegraphics[width=1\linewidth]{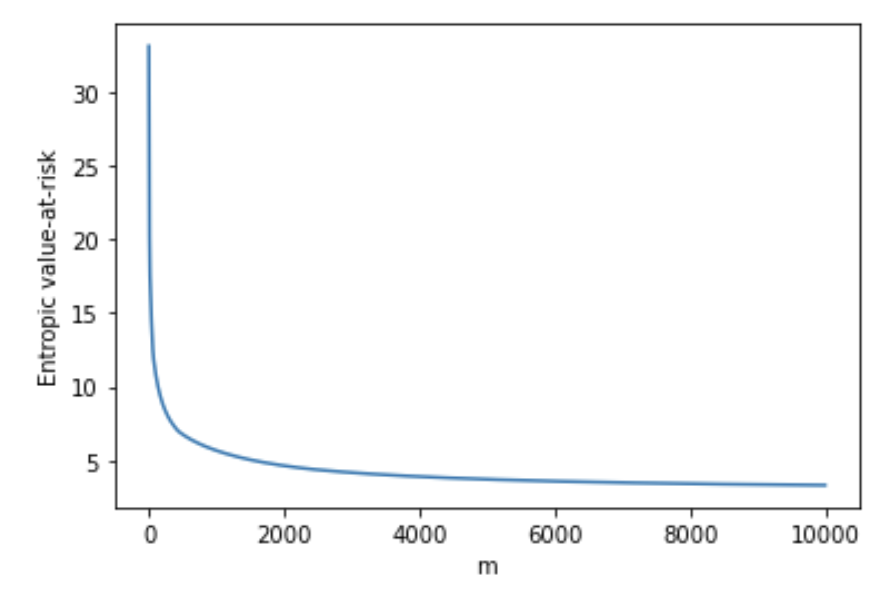}
  \caption{Paths of Entropic Value-at-risk}
  \label{evar}
\end{subfigure}%
\hspace{28pt}
\begin{subfigure}{0.45\textwidth}
  \centering
  \includegraphics[width=1\linewidth]{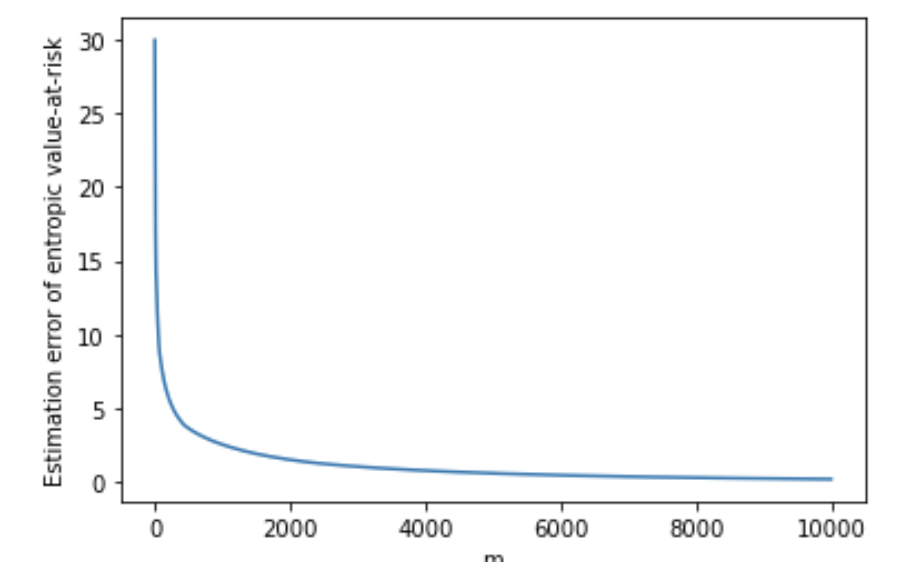}
  \caption{Estimation Error of Entropic Value-at-risk}
  \label{evar_error}
\end{subfigure}
\caption{Numerical Results Entropic Value-at-risk}
\end{figure}

\section{Rates for the average value at risk}
\label{sec:proofs.avar}
This section is dedicated to the proof of Theorem \ref{thm:avar}.
We will start by some preliminary considerations allowing us to introduce ideas used in the proofs.
The details of the proofs will be given in the subsection \ref{sec:proof.thm.avar}.

\subsection{Preliminaries}
The starting point of our method is to recognize \eqref{Milstein} as the $m$--th step of the Euler-Maruyama scheme that discretizes the stochastic differential equation
\begin{equation} \label{SDE}
	\diff Z_t^\lambda=-\nabla {L}(Z_t^\lambda)\diff t+\sqrt{2\lambda^{-1}}\diff W_t,\quad Z_0^\lambda=z\in \mathbb{R}^{d}
\end{equation}
where $W$ is a $d$-dimensional Brownian motion. 
This SDE is the Langevin SDE, with inverse temperature parameter $\lambda$. 
The Langevin SDE is widely studied in physics \citep{sekimoto} and for the sampling of Gibbs distribution via Markov chain Monte--Carlo methods \citep{lmc2}. 
Equipping the probability space $(\Omega, \mathcal{F},\PP)$ with the $\PP$--completion of the filtration of $W$, the equation \eqref{SDE} admits a unique strong solution.
It is well-known that this solution has a unique invariant measure (that we denote by $\mu_\infty^\lambda$) and whose density reads\footnote{As usual, in this article, we use the same notation for a probability measure on $\R^n$ for any $n \in \N$ and its density function.}
\begin{equation}
	\mu_\infty^\lambda(x) =\frac{e^{-\lambda {L}(x)}} {\int_{\R^{d}} e^{-\lambda {L}(z)}\diff z},  
\end{equation}
see e.g.\ \citep[Lemma 2.1]{Lacker}.
In this work, the interest of the Langevin equation (aside from its analytical tractability) stems from the fact that the limiting measure $\mu_\infty$ of $\mu_\infty^\lambda$ as $\lambda \to \infty$ concentrates on the minimizers of $L$, which we will show exist.
This follows from results of Hwang \cite{Hwang}. Intuitively, this means that if $ (r^*,m^*)$ is the minimizer of $L$, then for $\lambda \to \infty$ 
\begin{equation}
\label{eq:Huang.min}
	\int_{\R^{d}} L(z)\mu_\infty^\lambda(\diff z)\approx L(r^*,q^*).
\end{equation}
Moreover, the Langevin equation allows us to exploit classical techniques in order to derive explicit convergence rates to the invariant measure in the present non--convex potential case.
\begin{remark}
\label{rem.VaR}
	One interesting byproduct of our method is that, the simulation of $\mathrm{AVaR}$ directly allows to compute the value at risk and the optimal portfolios, as well as deriving non--asymptotic rates.
	Let us illustrate this on the problem of simulation of optimal portfolios $r^*$ in \Cref{opt}.
	As observed above, $\mu^\lambda_\infty$ converges to a measure $\mu$ supported on the optimal portfolios.
	Now, let $G:\R^{d}\to \R$ be a strictly convex function such that the gradient $\nabla G$ is invertible.
	Then, by Taylor's expansion we have
	\begin{equation*}
		G(\overline{Z}^{\lambda,n}_{M,h}) - G(q^*,r^*) \ge\nabla G(K)(\overline{Z}^{\lambda,n}_{M,h} - (q^*,r^*))
	\end{equation*}
	for some random variable $K$,
	showing that
	\begin{equation*}
		\|\widetilde{Z'}^{\lambda,n}_{M,h} - (q^*,r^*)\| \le \|\nabla G(K)^{-1}\| |G(\widetilde{Z'}^{\lambda,n}_{M,h}) - G(q^*,r^*) |.
	\end{equation*}
	Therefore, provided that the inverse of $\nabla G$ does not grow too fast, the argument we give below to prove \Cref{thm:avar} would allow to derive theoretical guarantees for the optimal portfolio as well, replacing $L$ by $G$.
\end{remark}

\subsection{Proof of Theorem \ref{thm:avar}}
\label{sec:proof.thm.avar}

Throughout this section we assume that the assumptions of Theorem \ref{thm:avar} are satisfied.
We split the proof into several intermediate lemmas.
The first one is probably well known, it asserts that the optimization problem defining $\overline{\mathrm{AVaR}}(f)$ admits a solution. 
\begin{lemma}\label{Hwang}
	The function $\widetilde{L}$ defined in \Cref{eq:object.L} admits a minimum. 
\end{lemma}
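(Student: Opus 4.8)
The plan is to show that $\widetilde L(r,q) = \frac{1}{1-u}\E[(f(r,S)-q)^+] + q$ is continuous and coercive on $\R^{d+1}$, so that a minimizer exists by the direct method. Continuity is immediate: $f$ is continuous in $r$ (Assumption \ref{ass.f.S}(ii)) and Lipschitz, $(\cdot)^+$ is $1$-Lipschitz, $S$ has finite moments, so dominated convergence gives joint continuity of $(r,q)\mapsto\widetilde L(r,q)$. The crux is coercivity: I would argue that $\widetilde L(r,q)\to\infty$ whenever $\|(r,q)\|\to\infty$, splitting into cases according to the relative growth of $\|r\|$ and $|q|$.

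For the coercivity argument I would proceed as follows. First, using $\E[(X-q)^+]\ge \E[X-q] = \E[X]-q$ and $\E[(X-q)^+]\ge 0$, one gets
\begin{equation*}
	\widetilde L(r,q) \ge \max\Big\{ q,\ \tfrac{1}{1-u}\big(\E[f(r,S)]-q\big)+q \Big\} = \max\Big\{ q,\ \tfrac{1}{1-u}\E[f(r,S)] - \tfrac{u}{1-u}q \Big\}.
\end{equation*}
If $q\to+\infty$ the first term blows up; if $q\to-\infty$ then $-\tfrac{u}{1-u}q\to+\infty$ and, since $\inf_r\E[f(r,S)]>0$ by Assumption \ref{ass.f.S}(iii), the second term blows up as well. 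This handles the regime where $|q|$ dominates. When $\|r\|\to\infty$ while $q$ stays bounded (say $|q|\le R$), the coercivity condition $\inf_s f(r,s)/\|r\|\to\infty$ from Assumption \ref{ass.f.S}(iii) forces $f(r,S)\ge \inf_s f(r,s) \to+\infty$ pointwise, hence $(f(r,S)-q)^+ = f(r,S)-q$ for $\|r\|$ large, and $\widetilde L(r,q)\ge \tfrac{1}{1-u}(\inf_s f(r,s) - R) + (-R)\to+\infty$. Combining the two regimes (and noting that on any sequence with $\|(r,q)\|\to\infty$ at least one of $|q|\to\infty$ or $\|r\|\to\infty$ along a subsequence, the latter possibly with bounded $q$) shows $\widetilde L$ is coercive.

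Given continuity and coercivity, the sublevel sets $\{\widetilde L\le c\}$ are closed and bounded, hence compact, and a continuous function attains its infimum on a nonempty compact sublevel set; thus $\widetilde L$ admits a minimum. The main obstacle is purely bookkeeping: making the case split on $(\|r\|,|q|)\to\infty$ airtight, in particular the mixed regime where $\|r\|\to\infty$ and $q\to-\infty$ simultaneously — there one still has $(f(r,S)-q)^+\ge f(r,S)-q\ge \inf_s f(r,s) - q$, and both $\tfrac{1}{1-u}\inf_s f(r,s)$ and $-\tfrac{u}{1-u}q$ are nonnegative and at least one tends to $+\infty$, so the lower bound again diverges. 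No single step is deep; the only care needed is to invoke Assumption \ref{ass.f.S}(iii) in the correct form in each regime.
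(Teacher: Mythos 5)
Your proof is correct and follows essentially the same approach as the paper: establish that $\widetilde L$ is continuous with bounded (hence compact) sublevel sets, and conclude existence of a minimizer (the paper invokes \cite{Hwang} for the last step, but this is the same Weierstrass argument). If anything your case split is slightly more careful: the paper asserts $\widetilde L(z)/\|z\|\to\infty$, which fails along $q\to+\infty$ with $r$ bounded (there the growth is only linear), whereas your regime-by-regime argument correctly uses Assumption~\ref{ass.f.S}(iii) only where needed and still yields coercivity.
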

\begin{proof}
	In \citep[Proposition 2.1]{Hwang}, \citeauthor{Hwang} gives a sufficient condition for $\widetilde{L}$ admitting a minimum: $\{\mu_\infty^{\lambda}\}$ is tight. A sufficient condition for the tightness of $\{\mu_\infty^{\lambda}\}_{\lambda > 0}$ is that there exists $\varepsilon>0$ such that the set $B:=\{z \in \R^{d}:\widetilde{L}(z)\leq \varepsilon\}$ is compact \citep[Proposition 2.3]{Hwang}. The rest of the proof checks the compactness of set $B$ for any $\epsilon>0$.  

	Since $\widetilde{L}$ is continuous, the set $B=\{z\in \R^{d}:\widetilde{L}(z) \leq \varepsilon\}$ is closed as the pre-image of the closed set $(-\infty,\varepsilon]$. 
	In addition, since $\frac{1}{1-u}(x)^+>x$ for $|x|$ large enough, we have that $B=\{z \in \R^{d}: \frac{1}{1-u}\E[(f(r,S)-q)^+]+q\leq\epsilon\}$ is bounded. To see this, assume to the contrary that $B$ is unbounded. Then there exists a sequence $\{z_i\} \in B$ such that $\|z_i\| \to \infty$. Then for the subsequence of $\{z_i\}$ with $\frac{1}{1-u}(f(r,S)-q)^+>f(r,S)$, we have $\widetilde{L}(x_i) \to \infty$, which contradicts $\widetilde{L}(x)\leq \epsilon$.  
	Thus, the set $B$ is bounded, and is therefore compact. 
\end{proof}

To derive the claimed convergence rate, we decompose the expected error into terms that will be handled independently.
First, we will exploit the approximation \eqref{eq:Huang.min}.
Next, using the $N$ independent Brownian motions $\widetilde W^n$ introduced just before \eqref{Milstein}, we construct $N$ i.i.d. copies $\widehat{Z}_t^{\lambda,n}$ of the solution of the Langevin equation as solutions of the SDEs
\begin{equation}
\label{eq:def.Z.lambda}
	\diff\widehat{Z}^{\lambda,n}_t=-\nabla L(\widehat{Z}^{\lambda,n}_t)\diff t+\sqrt{2 \lambda^{-1}}\diff \widetilde{W}^n,\quad n=1,\dots, N.
\end{equation}
Recall that 
\begin{equation*}
	\widetilde Z_{m+1,h}^{\lambda,n} = \widetilde Z_{m,h}^{\lambda,n} -\nabla L(\widetilde Z_{m,h}^{\lambda,n})h+\sqrt{2\lambda^{-1}} \Delta \widetilde W_h^n, \quad \text{with} \quad \Delta \widetilde W_h^n: = \widetilde W^n_{m+1}- \widetilde W^n_m,
\end{equation*} 
\begin{equation*}
	\widetilde Z_{m+1,h}^{\prime\lambda,n} = \widetilde Z_{m,h}^{\prime\lambda,n} -\nabla \ell(\widetilde Z_{m,h}^{\prime\lambda,n})h+\sqrt{2\lambda^{-1}} \Delta \widetilde W_h^n, \quad \text{with} \quad \Delta \widetilde W_h^n: = \widetilde W^n_{m+1}- \widetilde W^n_m,
\end{equation*} 
 and 
\begin{equation*}
	\overline Z_{m+1,h}^{\lambda,n} = \overline Z_{m,h}^{\lambda,n} -\nabla \overline{\ell}(\overline Z_{m,h}^{\lambda,n})h+\sqrt{2\lambda^{-1}} \Delta \overline W_h^n, \quad \text{with} \quad \Delta \overline W_h^n: = \overline W^n_{m+1}- \overline W^n_m.
\end{equation*}
We decompose the error 
as 
\begin{align}
	&\E \bigg[\Big|\frac{1}{N} \sum_{n=1}^N \overline{\ell}(\overline{Z}^{\lambda,n}_{M,h})-\overline{\text{AVAR}}_u(f)\Big|^2\bigg]\nonumber
	\leq 2^6 \E \bigg[\Big|\frac{1}{N} \sum_{n=1}^N \overline{\ell}(\overline{Z}^{\lambda,n}_{M,h})-\frac{1}{N}\sum_{n=1}^N \ell (\widetilde{Z'}^{\lambda,n}_{M,h})\Big|^2\bigg]+
	2^6 \E \bigg[\Big|\frac{1}{N} \sum_{n=1}^N \ell(\widetilde{Z'}^{\lambda,n}_{M,h})-\frac{1}{N}\sum_{n=1}^N {L}(\widetilde{Z}^{\lambda,n}_{M,h})\Big|^2\bigg]\\
	&\qquad+2^6 \E \bigg[\Big|\frac{1}{N} \sum_{n=1}^N {L}(\widetilde{Z}^{\lambda,n}_{M,h})-\frac{1}{N}\sum_{n=1}^N {L}(\widehat{Z}^{\lambda,n}_t)\Big|^2\bigg]\nonumber
	 +2^6\E \bigg[\Big|\frac1N \sum_{n=1}^N {L}(\widehat{Z}^{\lambda,n}_t)-\E[L(Z_t^\lambda)]\Big|^2\bigg]
	+2^6\Big|\E[L(Z^\lambda_t)]-\int_{\R^{d}} {L} \diff \mu^\lambda_\infty\Big|^2\nonumber\\
	&\qquad +2^6\Big|\int_{\R^{d}} {L} \diff \mu_\infty^\lambda-\int_{\R^{d}} \widetilde{L} \diff\mu_\infty^\lambda\Big|^2
	+2^6 \Big|\int_{\R^{d}} \widetilde{L} d\mu_\infty^\lambda-\overline{\text{AVaR}}_u(f)\Big|^2.
	\label{decomp2}
\end{align}
The rest of the proof consists in controlling each term above separately. 

\begin{lemma}\label{lem:dist}

Under the conditions of Theorem 2.3, for all $t,0<\gamma<1, u\in(0,1), N, M \in \N^\star$ and $\lambda>1$, we have 
\begin{align*}
		&\E \bigg[\Big|\frac{1}{N}\sum_{n=1}^N \overline{\ell}(\overline{Z}^{\lambda,n}_{M,h}) - \frac{1}{N} \sum_{n=1}^N \ell(\widetilde{Z'}^{\lambda,n}_{M,h}) \Big|^2\bigg]
		\leq C^{'1}_{(u,t,\lambda)}h^2+C^{'2}_{(u,t,\lambda)}\gamma^2,
	\end{align*}
	where $C^{'1}_{(u,t,\lambda)}$, and $C^{'2}_{(u,t,\lambda)}$ are given in equations \eqref{C'1} and \eqref{C'2}.
\end{lemma}	
\begin{proof}
    Let $\overline{Z}^{\lambda,n,d-1}_{M,h}$ denote the last $d-1$ coordinates of $\overline{Z}^{\lambda,n}_{M,h}$. By the definition of $\overline{l}$ and Jensen's inequality, we have 
	\begin{align}
		&\E \bigg[\Big|\frac{1}{N}\sum_{n=1}^N \overline{\ell}(\overline{Z}^{\lambda,n}_{M,h})-\frac{1}{N} \sum_{n=1}^N \ell(\widetilde{Z'}^{\lambda,n}_{M,h})\Big|^2\bigg]\nonumber\\
		&\quad =\E \bigg[\Big|\frac{1}{N}\sum_{n=1}^N\left\{\ell(\overline{Z}^{\lambda,n}_{M,h})+\frac{\gamma}{2}\text{dist}^2(\overline{Z}^{\lambda,n,d-1}_{M,h},A)-\ell(\widetilde{Z'}^{\lambda,n}_{M,h})\right\}\Big|^2\bigg]\nonumber\\
		&\quad \leq C\bigg(\frac{1}{N}\sum_{n=1}^N \E\Big[\Big(\ell(\overline{Z}^{\lambda,n}_{M,h})-\ell(\widetilde{Z'}^{\lambda,n}_{M,h})\Big)^2\Big]+\frac{\gamma^2}{N}\sum_{n=1}^N \E \text{dist}^4(\overline{Z}^{\lambda,n,d-1}_{M,h},A)\bigg).\label{z_overline}
\end{align}
For the first term in \eqref{z_overline}, using the definition of $\ell$, we have 
\begin{align}\label{l_decomp}
&\E\Big[(\ell\Big(\overline{Z}^{\lambda,n}_{M,h})-\ell(\widetilde{Z'}^{\lambda,n}_{M,h})\Big)^2\Big]\nonumber\\
=& \E\Big[\Big(\widetilde{\ell}(\overline{Z}^{\lambda,n}_{M,h})-\widetilde{\ell}(\widetilde{Z'}^{\lambda,n}_{M,h})\Big)^2\Big]+\frac{\gamma^2}{2}\E\Big[\big\|\overline{Z}^{\lambda,n}_{M,h}-\widetilde{Z'}^{\lambda,n}_{M,h}\big\|^2\Big]\nonumber\\
\leq & \Bigg(\Big(\frac{2}{1-u}\Big)^2+\frac{\gamma^2}{2}\Bigg)\E\Big[\big\|\overline{Z}^{\lambda,n}_{M,h}-\widetilde{Z'}^{\lambda,n}_{M,h}\big\|^2\Big],
\end{align}
where we used that $\widetilde{\ell}$ is $\frac{2}{1-u}$-Lipschitz in the last step.

To control $\E\Big[\big\|\overline{Z}^{\lambda,n}_{M,h}-\widetilde{Z'}^{\lambda,n}_{M,h}\big\|^2\Big]$, using the definitions of $\overline{Z}^{\lambda,n}_{M,h}$ and $\widetilde{Z'}^{\lambda,n}_{M,h}$, we have
\begin{align*}
\E\Big[\big\|\overline{Z}^{\lambda,n}_{M,h}-\widetilde{Z'}^{\lambda,n}_{M,h}\big\|^2\Big]
\leq C\E\Big[\big\|\overline{Z}^{\lambda,n}_{M-1,h}-\widetilde{Z'}^{\lambda,n}_{M-1,h}\big\|^2\Big]+ Ch^2\E\Big[\big\|\nabla \overline{\ell}(\overline{Z}^{\lambda,n}_{M-1,h})-\nabla \ell (\widetilde{Z'}^{\lambda,n}_{M-1,h})\big\|^2\Big].
\end{align*}
Using this recursive relationship, it can be checked by induction that we have 
\begin{equation}
\E\Big[\big\|\overline{Z}^{\lambda,n}_{M,h}-\widetilde{Z'}^{\lambda,n}_{M,h}\big\|^2\Big]
\leq C\sum_{m=0}^{M-1}h^2 \E\Big[\big\|\nabla \overline{\ell}(\overline{Z}^{\lambda,n}_{m,h})-\nabla \ell (\widetilde{Z'}^{\lambda,n}_{m,h})\big\|^2\Big].\label{bartilde_diff}
\end{equation}
Note that the derivative of $\text{dist}^2(r,A)$ is $2(r-A(r)),$ where we denote $A(r):=\argmin{\text{dist}^2(r,A)}$. In addition, $\nabla A(r) = 1$. Therefore, $\nabla \overline{\ell}$ is $\Big(\frac{1}{1-u}+2\Big)-$ Lipschitz. By adding and then subtracting $\nabla \overline{\ell}(\widetilde{Z'}^{\lambda,n}_{m,h})$, and then using the definitions of $\nabla \overline{\ell}$ and $\nabla \ell$, we have 
\begin{align}
&\E\Big[\big\|\nabla \overline{\ell}(\overline{Z}^{\lambda,n}_{m,h})-\nabla \ell (\widetilde{Z'}^{\lambda,n}_{m,h})\big\|^2\Big]\nonumber\\
\leq & C\E\Big[\big\|\nabla \overline{\ell}(\overline{Z}^{\lambda,n}_{m,h})-\nabla \overline{\ell} (\widetilde{Z'}^{\lambda,n}_{m,h})\big\|^2\Big]+C \E\Big[\big\|\nabla \overline{\ell}(\widetilde{Z'}^{\lambda,n}_{m,h})-\nabla \ell (\widetilde{Z'}^{\lambda,n}_{m,h})\big\|^2\Big]\nonumber\\
\leq & \Big(\frac{1}{(1-u)^2}+2\Big)\E \Big [\big\|\overline{Z}^{\lambda,n}_{m,h}-\widetilde{Z'}^{\lambda,n}_{m,h}\big\|^2\Big]+C\E\Big[\big\|\widetilde{Z'}^{\lambda,n}_{m,h}-A(\widetilde{Z'}^{\lambda,n}_{m,h})\big\|^2\Big].\label{gradient_bar}
\end{align}
Combining equations \eqref{bartilde_diff} and \eqref{gradient_bar}, we have 
\begin{equation*}
\E \Big [\big\|\overline{Z}^{\lambda,n}_{M-1,h}-\widetilde{Z'}^{\lambda,n}_{M-1,h}\big\|^2\Big]
\leq Ch^2 \sum_{m=0}^{M-1}\Bigg\{ \Big(\frac{1}{(1-u)^2}+2\Big)\E \Big [\big\|\overline{Z}^{\lambda,n}_{m,h}-\widetilde{Z'}^{\lambda,n}_{m,h}\big\|^2\Big]+\E\Big[\big\|\widetilde{Z'}^{\lambda,n}_{m,h}\big\|^2\Big]+\E\Big[\big\|A(\widetilde{Z'}^{\lambda,n}_{m,h})\big\|^2\Big]\Bigg\}.
\end{equation*}
Using the discrete version of the Gr\"onwall's inequality \citep[Proposition 5]{discretegronwall}, we have 
\begin{equation}\label{bar_tilde}
\E \Big [\big\|\overline{Z}^{\lambda,n}_{M-1,h}-\widetilde{Z'}^{\lambda,n}_{M-1,h}\big\|^2]\leq Ch^2M\bigg(\E \Big[\big\|\widetilde{Z'}^{\lambda,n}_{M,h}\big\|^2\Big]+\E\Big[\big\|A(\widetilde{Z'}^{\lambda,n}_{M,h})\big\|^2\Big]\bigg)\exp\Bigg(C\bigg(\frac{1}{(1-u)^2}+2\bigg)Mh^2\Bigg).
\end{equation}
For the second term in \eqref{z_overline}, we rewrite as 
\begin{align}\label{dist}
\frac{\gamma^2}{N}\sum_{n=1}^N \E \text{dist}^4(\overline{Z}^{\lambda,n,d-1}_{M,h},A)=\frac{\gamma^2}{N}\sum_{n=1}^N \E\Big[\big\|\overline{Z}^{\lambda,n,d-1}_{M,h}-A(\overline{Z}^{\lambda,n,d-1}_{M,h})\big\|^4\Big]\leq \frac{\gamma^2}{N}\sum_{n=1}^N \E\big\|\overline{Z}^{\lambda,n}_{M,h}\big\|^4+C \gamma^2,
\end{align}
where the last step follows because the set $A$ is compact. 

It remains to bound the fourth moments of $\overline{Z}^{\lambda,n}_{M,h}$ and $\widetilde{Z}^{',\lambda,n}_{M,h}$. Using the definition of $\overline{Z}^{\lambda,n}_{m+1,h}$, and letting $C_d=\Big(\frac{d}{2}+1\Big)\Big(\frac{d}{2}\Big)$ we have
	\begin{align*}
		\E[\big\|\overline{Z}^{\lambda,n}_{m,h}\big\|^4]
		&=\E\Big[ \big\|\widetilde{Z}^{\lambda,n}_{m-1,h}-h\nabla \overline{\ell}(\overline{Z}^{\lambda,n}_{m-1,h})+\sqrt{2\lambda^{-1}}\Delta \widetilde{W_h}\big\|^4 \Big]\\	
		& \leq C\bigg(\E\Big[\big\|\overline{Z}^{\lambda,n}_{m-1,h}\big\|^4\Big]+h^4\E \Big[|\nabla \overline{\ell}( \overline{Z}^{\lambda,n}_{m-1,h})|^4\Big]+\frac{h^2}{\lambda^2}\Big(\frac{d}{2}+1\Big)\Big(\frac{d}{2}\Big)\bigg)\\
		& \leq C\bigg(\E\Big[\big\| \overline{Z}^{\lambda,n}_{m-1,h}\big\|^4\Big]+h^4\E \Big[\big|\nabla \ell( \overline{Z}^{\lambda,n}_{m-1,h})\big|^4\Big]+h^4 \E\Big[\big\|2(\overline{Z}^{\lambda,n}_{m-1,h}-A(\overline{Z}^{\lambda,n}_{m-1,h}))\big\|^4\Big]+\frac{h^2}{\lambda^2}C_d\bigg)\\
		&\leq C\bigg(\E\Big[\big\|\overline{Z}^{\lambda,n}_{m-1,h}\big\|^4\Big]+h^4\E\bigg[\Big\|\frac{2}{1-u}+\gamma\overline{Z}^{\lambda,n}_{m-1,h}\Big\|^4\bigg]+h^4\E\Big[\big\|\overline{Z}^{\lambda,n}_{m-1,h}\big\|^4\Big]+h^4\E \Big[\big\|A(\overline{Z}^{\lambda,n}_{m-1,h})\big\|^4\Big]+\frac{h^2}{\lambda^2}C_d\bigg)\\
	&\leq  C\left((1+\gamma^4h^4+h^4)\E\Big[\big\|\overline{Z}^{\lambda,n}_{m-1,h}\big\|^4\Big]+\frac{h^4}{(1-u)^4}+h^4+\frac{h^2}{\lambda^2}C_d\right),
	\end{align*}
	where we used the compactness of $A$ in the last step. Using this recursive relationship, it can be checked by induction that for all $m \in \N$, we have 
	\begin{align}\label{z_barmoment}
	\E \Big[\big\|\overline{Z}^{\lambda,n}_{m,h}\big\|^4\Big] 
	&\leq C\bigg((1+\gamma^4h^4+h^4)^m\E[\big\|\overline{Z}^{\lambda,n}_{0,h}\big\|^4]+\Big(\frac{h^4}{(1-u)^4}+h^4+\frac{h^2}{\lambda^2}C_d\Big)\sum_{i=1}^m(1+\gamma^4h^4+h^4)^i\bigg)\nonumber\\
	&\leq C\bigg((1+\gamma^4h^4+h^4)^m+\Big(\frac{h^4}{(1-u)^4}+h^4+\frac{h^2}{\lambda^2}C_d\Big)\frac{(1+\gamma^4h^4+h^4)((1+\gamma^4h^4+h^4)^m-1)}{(1+\gamma^4h^4+h^4)-1}\bigg)\nonumber\\
	&\leq C\bigg((1+\gamma^4h^4+h^4)^m+\Big(\frac{h^4}{(1-u)^4}+h^4+\frac{h^2}{\lambda^2}C_d\Big)(1+\gamma^4h^4+h^4)^m\bigg),
	\end{align}
	where the second inequality uses sum of geometric series.
	The bound of $\widetilde{Z}^{'\lambda,n}_{M,h}$ is given in the proof of Lemma 3.4 below. Combining equations \eqref{z_overline}, \eqref{l_decomp}, \eqref{bar_tilde},\eqref{dist}, and the moments given in equations \eqref{eq:4.moment} and \eqref{z_barmoment}, and taking $h,\gamma \leq 1$, and $M \geq 1$,  we have the result of the lemma. 
\end{proof}

\begin{lemma}
\label{lem:l.vs.ell}
	Under the conditions of Theorem 2.3, for all $t,\gamma>0, u\in(0,1), N, M \in \N^\star$ and $\lambda>1$ if  $h<\frac{1}{\left(\frac{2}{1-u}+\gamma\right)}$ then we have 
	\begin{align*}
		&\E \bigg[\Big|\frac{1}{N}\sum_{n=1}^N \ell(\widetilde{Z'}^{\lambda,n}_{M,h}) - \frac{1}{N} \sum_{n=1}^N L(\widetilde{Z}^{\lambda,n}_{M,h}) \Big|^2\bigg]
		\le C^{'3}_{(u,t)}h^2+\big(1+C^{'4}_{(u,t)}\big)\frac{C}{N}+C^{'5}_{(u,t,\lambda)}\gamma^2,
	\end{align*}
	for some constants $C^{'3}_{(u,t)}, C^{'4}_{(u,t)}$ and $ C^{'5}_{(u,t,\lambda)}$ given in equations \eqref{C'3} - \eqref{C'5}.
\end{lemma}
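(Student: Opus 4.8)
The plan is to bound the two sources of discrepancy between $\ell$ and $L$ separately. Recall that $L(z) = \widetilde L(r,q) + \tfrac\gamma2 q^2$ with $\widetilde L(r,q) = \tfrac{1}{1-u}\E[(f(r,S)-q)^+] + q$, whereas $\ell(z) = \widetilde\ell(z) + \tfrac\gamma2 m^2$ with $\widetilde\ell(z) = \tfrac{1}{N}\sum_{p=1}^N \tfrac{1}{1-u}(f(r,S^p)-m)^+ + m$. Thus $\ell(z) - L(z) = \widetilde\ell(z) - \widetilde L(z)$ is exactly the Monte--Carlo error of approximating the expectation $\E[(f(r,S)-q)^+]$ by the empirical average over $(S^1,\dots,S^N)$, evaluated at the random point $z = \widetilde Z'^{\lambda,n}_{M,h}$. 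The first step is therefore to write
\begin{align*}
	\E\bigg[\Big|\frac1N\sum_{n=1}^N \ell(\widetilde Z'^{\lambda,n}_{M,h}) - \frac1N\sum_{n=1}^N L(\widetilde Z^{\lambda,n}_{M,h})\Big|^2\bigg]
	&\le 2\,\E\bigg[\Big|\frac1N\sum_{n=1}^N \big(\ell(\widetilde Z'^{\lambda,n}_{M,h}) - L(\widetilde Z'^{\lambda,n}_{M,h})\big)\Big|^2\bigg]\\
	&\quad + 2\,\E\bigg[\Big|\frac1N\sum_{n=1}^N \big(L(\widetilde Z'^{\lambda,n}_{M,h}) - L(\widetilde Z^{\lambda,n}_{M,h})\big)\Big|^2\bigg],
\end{align*}
so that the $\gamma^2$ term will come from the second summand and the $1/N$ term from the first.

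For the first summand, I would condition on the Gaussian increments $\Delta\widetilde W$ (equivalently on the chains $\widetilde Z'^{\lambda,n}_{M,h}$, which are $\sigma(\Delta\widetilde W)$--measurable and independent of the sample $(S^1,\dots,S^N)$ used in $\widetilde\ell$). Since $x\mapsto (f(r,x)-m)^+$ has, by Assumption \ref{ass.f.S}$(ii)$ and $(iv)$, linear growth controlled uniformly in $r$ and $S$ has a finite fourth moment by $(i)$, each summand $\ell(\widetilde Z'^{\lambda,n}_{M,h}) - L(\widetilde Z'^{\lambda,n}_{M,h})$ has conditional mean zero and a conditional second moment bounded by a constant depending only on $u$ and the fourth moment of $S$; since $P=N$ and the same sample is shared across $n$, I would simply bound $\E[|\frac1N\sum_n(\cdots)|^2] \le \E[|\ell(\widetilde Z'^{\lambda,1}_{M,h}) - L(\widetilde Z'^{\lambda,1}_{M,h})|^2]$ by Jensen, but to get the $1/N$ decay one must instead exploit that $\widetilde\ell$ itself is an average of $N$ i.i.d.\ terms: writing $\ell(z)-L(z) = \frac1N\sum_{p=1}^N \big(\phi_z(S^p) - \E\phi_z(S)\big)$ with $\phi_z(s) := \tfrac{1}{1-u}(f(r,s)-m)^+$, and then bounding the mean square of the double average over $n$ and $p$, the cross terms in $p$ vanish in conditional expectation, leaving $N$ diagonal terms each $O(1)$, divided by $N^2$, hence $C_{(u,S)}/N$. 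This is the step where a little care with the order of the two averages is needed.

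For the second summand, note $L$ is Lipschitz on bounded sets but not globally; however $\nabla L$ has linear growth (the $\tfrac\gamma2 q^2$ penalty contributes $\gamma q$, and the rest has bounded gradient by $(iv)$), so $|L(z)-L(z')| \le (C_{(u)} + \gamma(|q|\vee|q'|))\|z-z'\|$. The chains $\widetilde Z'^{\lambda,n}$ and $\widetilde Z^{\lambda,n}$ are driven by the \emph{same} Brownian increments and differ only in that one uses $\nabla\ell$ and the other $\nabla L$; a standard discrete Grönwall argument on $\|\widetilde Z'^{\lambda,n}_{m,h} - \widetilde Z^{\lambda,n}_{m,h}\|$, using the step-size restriction $h < 1/(\tfrac{2}{1-u}+\gamma)$ to absorb the drift contraction, bounds this difference by $h\sum_{k<M} \|\nabla\ell(\widetilde Z^{\lambda,n}_{k,h}) - \nabla L(\widetilde Z^{\lambda,n}_{k,h})\|$ up to a multiplicative constant; and since $\nabla\ell - \nabla L$ is again a mean-zero Monte--Carlo error of a uniformly bounded quantity (by $(iv)$ the gradient in $r$ is bounded, and the gradient in $m$ differs only through the indicator $1_{\{f(r,S^p)>m\}}$ vs.\ its probability), this is $O(\gamma)$ in $L^2$ after also controlling the second moment of $q$-coordinate of the chain uniformly in $m\le M$ — which is where the constant $C_{(M,\lambda,t,u)}$ enters, via a moment bound on $\widetilde Z^{\lambda,n}_{M,h}$ obtained from the linear growth of the drift and $h = t/M^2$. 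The main obstacle is this last bookkeeping: establishing a uniform-in-step $L^2$ (or $L^4$) bound on the Euler chain with the non-globally-Lipschitz drift $\nabla L$, so that the Lipschitz-on-bounded-sets estimate for $L$ can be turned into the clean $\gamma^2$ rate; I expect this to be handled by a routine but slightly delicate induction exploiting the dissipativity built into Assumption \ref{ass.f.S}$(iii)$ and $(v)$ together with the step-size restriction.
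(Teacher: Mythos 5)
Your overall plan — pivot-decompose the difference, treat the pure Monte--Carlo discrepancy $\widetilde\ell-\widetilde L$ at a common point by conditioning (giving $1/N$), and control the remaining chain discrepancy $\widetilde{Z'}^{\lambda,n}_{M,h}-\widetilde Z^{\lambda,n}_{M,h}$ by a discrete Gr\"onwall argument plus moment bounds on the chains — is the same as the paper's. The only superficial difference is the pivot point: you split at $L(\widetilde{Z'}^{\lambda,n}_{M,h})$, whereas the paper separates the $\tfrac\gamma2 q^2$ penalty from $\widetilde L$ first and then pivots the $\widetilde L$ part at $\widetilde Z'$.

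There is, however, a wrong step in how you claim the $\gamma^2$ factor emerges. You write that, since $\nabla\ell - \nabla L$ is a mean-zero Monte--Carlo error of a uniformly bounded quantity, the chain discrepancy $\|\widetilde{Z'}^{\lambda,n}_{M,h}-\widetilde Z^{\lambda,n}_{M,h}\|$ is ``$O(\gamma)$ in $L^2$.'' This is incorrect: the gradient error $\nabla\ell-\nabla L$ has nothing to do with $\gamma$ --- both $\ell$ and $L$ contain the identical $\tfrac\gamma2 q^2$ term, which cancels in the gradient difference, and what remains is a Monte--Carlo error in the $\widetilde L$/$\widetilde\ell$ parts that by Assumption \ref{ass.f.S}$(iv)$ is bounded by a constant depending only on $u$. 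Feeding that into the Gr\"onwall recursion with step size $h=t/M^2$ produces a discrepancy of size $O(t/M^{c})$ for some $c>0$, \emph{independent of $\gamma$}. The $\gamma^2$ prefactor in the lemma's conclusion does not come from $\gamma$-smallness of the chain discrepancy; it comes from the explicit $\tfrac\gamma2$ in the quadratic penalty term. Concretely, after splitting $L-\widetilde L = \tfrac\gamma2\|\cdot\|^2$, the penalty contribution is $\tfrac\gamma2(\|\widetilde{Z'}^{\lambda,n}_{M,h}(1)\|^2-\|\widetilde Z^{\lambda,n}_{M,h}(1)\|^2)$, and Cauchy--Schwarz together with uniform fourth-moment bounds on the chains turns its mean square into $\gamma^2 \cdot C_{(M,\lambda,t,u)}$. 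The remaining $\widetilde L(\widetilde{Z'})-\widetilde L(\widetilde Z)$ part (Lipschitz with constant $\tfrac{2}{1-u}$, \emph{not} $\gamma$) gives a term of size $O(t^2/M^{c})$ with no $\gamma$-factor; it is absorbed into $C_{(M,\lambda,t,u)}\gamma^2$ only because the constant is allowed to depend on $M$, $\lambda$, $t$ and $u$, under the standing constraint $\gamma\ge 1/\lambda$ of Theorem \ref{thm:avar}. Your write-up should therefore make the $\gamma^2$ provenance explicit via the quadratic-penalty split, rather than attributing $\gamma$-smallness to the Gr\"onwall output, which does not hold.
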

\begin{proof}
	By the definition of $L$ and Jensen's inequality, we have 
	\begin{align}
		&\E \bigg[\Big|\frac{1}{N}\sum_{n=1}^N \ell(\widetilde{Z'}^{\lambda,n}_{M,h})-\frac{1}{N} \sum_{n=1}^N L(\widetilde{Z}^{\lambda,n}_{M,h})\Big|^2\bigg]\nonumber\\
		&\quad =\E \Bigg[\bigg|\frac{1}{N}\sum_{n=1}^N\left\{\widetilde{L}(\widetilde{Z}^{\lambda,n}_{M,h})+\frac{\gamma}{2}\big\|\widetilde{Z}^{\lambda,n}_{M,h}\big\|^2-\widetilde{\ell}(\widetilde{Z'}^{\lambda,n}_{M,h})-\frac{\gamma}{2}\big\|\widetilde{Z'}^{\lambda,n}_{M,h}\big\|^2\right\}\bigg|^2\Bigg]\nonumber\\
		&\quad \leq C\bigg(\frac{1}{N}\sum_{n=1}^N \E\Big[(\widetilde{L}(\widetilde{Z}^{\lambda,n}_{M,h})-\widetilde{\ell}\big(\widetilde{Z'}^{\lambda,n}_{M,h})\big)^2\Big]+\frac{\gamma^2}{N}\sum_{n=1}^N \E \Big[\big(\big\|\widetilde{Z}^{\lambda,n}_{M,h}\big\|^2-\big\|\widetilde{Z'}^{\lambda,n}_{M,h}\big\|^2\big)^2\Big]\bigg)\nonumber\\
		&\quad \leq \frac{C}{N}\sum_{n=1}^N\E\Big[\big(\widetilde{L}(\widetilde{Z}^{\lambda,n}_{M,h})-\widetilde{\ell}(\widetilde{Z'}^{\lambda,n}_{M,h})\big)^2\Big]
		+C\frac{\gamma^2}{N}\sum_{n=1}^N\Big(\E\Big[\big\|\widetilde{Z}^{\lambda,n}_{M,h}\big\|^4\Big]+\E\Big[\big\|\widetilde{Z'}^{\lambda,n}_{M,h}\big\|^4\Big]\Big),\label{z'}
\end{align}
where the last inequality follows by Cauchy-Schwarz inequality. 
%%%%%%%%%%%%%%%%%%%%%%%%%%%

%We will show in \Cref{lem:int} below that $\|\widetilde{Z}^{\lambda,n}_{M,h}\|$ have exponential moments bounded by a constant $C_{M,h,\lambda}$, and similar arguments allow to show that $\|\widetilde{Z'}^{\lambda,n}_{M,h}\|^4$ also has exponential moment bounded by the same constant.
%This shows in particular that 
%\begin{equation}
%\label{eq:4.moment}
	%\E\Big[\|\widetilde{Z}^{\lambda,n}_{M,h}\|^4\Big]+\E\Big[\|\widetilde{Z'}^{\lambda,n}_{M,h}\|^4\Big] \le 2C_{(M,t,\lambda)}.
%\end{equation}
%\medskip

Next, we will bound the fourth moments of $\widetilde{Z}^{\lambda,n}_{M,h}$ and $\widetilde{Z'}^{\lambda,n}_{M,h}$.
	Recall $C_d=\Big(\frac{d}{2}+1\Big)\Big(\frac{d}{2}\Big)$. Using the definition of $\widetilde{Z}^{\lambda,n}_{m+1,h}$, we have
	\begin{align*}
		\E\Big[\big\|\widetilde{Z}^{\lambda,n}_{m,h}\big\|^4\Big]
		&=\E\Big[ \big\|\widetilde{Z}^{\lambda,n}_{m-1,h}-h\nabla L(\widetilde{Z}^{\lambda,n}_{m-1,h})+\sqrt{2\lambda^{-1}}\Delta \widetilde{W_h}\big\|^4 \Big]\\	
		& \leq C\bigg(\E\Big[\big\|\widetilde{Z}^{\lambda,n}_{m-1,h}\big\|^4\Big]+h^4\E\Big[|\nabla L( \widetilde{Z}^{\lambda,n}_{m-1,h})|^4\Big]+\frac{h^2}{\lambda^2}C_d\bigg)\\
		&\leq C\bigg(\E\Big[\big\|\widetilde{Z}^{\lambda,n}_{m-1,h}\big\|^4\Big]+h^4\E\bigg[\Big\|\frac{2}{1-u}+\gamma\widetilde{Z}^{\lambda,n}_{m-1,h}\Big\|^4\bigg]+\frac{h^2}{\lambda^2}C_d\bigg)\\
		&\leq  C\left(\E\Big[\big\|\widetilde{Z}^{\lambda,n}_{m-1,h}\big\|^4\Big]+\frac{h^4}{(1-u)^4}+\gamma^4h^4\E\Big[\big\|\widetilde{Z}^{\lambda,n}_{m-1,h}\big\|^4\Big]+\frac{h^2}{\lambda^2}C_d\right)\\
	&\leq  C\left((1+\gamma^4h^4)\E\Big[\big\|\widetilde{Z}^{\lambda,n}_{m-1,h}\big\|^4\Big]+\frac{h^4}{(1-u)^4}+\frac{h^2}{\lambda^2}C_d\right),
	\end{align*}
	where the third inequality follows from the fact that $\widetilde{L}$ is $\frac{2}{1-u}-$Lipschitz.
	Using this recursive relationship, it can be checked by induction that for all $m \in \N$, 
	\begin{align}
	\label{eq:4.moment}
		\E\Big[\big\|\widetilde{Z}^{\lambda,n}_{m,h}\big\|^4\Big]
	&\leq C\bigg((1+\gamma^4h^4)^m\E\Big[\big\|\widetilde{Z}^{\lambda,n}_{0,h}\big\|^4\Big]+\Big(\frac{h^4}{(1-u)^4}+\frac{h^2}{\lambda^2}C_d\Big)\sum_{i=1}^m(1+\gamma^4h^4)^i\bigg)\nonumber\\
	&\leq C\bigg((1+\gamma^4h^4)^m+\Big(\frac{h^4}{(1-u)^4}+\frac{h^2}{\lambda^2}C_d\Big)\frac{(1+\gamma^4h^4)((1+\gamma^4h^4)^m-1)}{(1+\gamma^4h^4)-1}\bigg)\nonumber\\
	&\leq C\bigg((1+\gamma^4h^4)^m+\Big(\frac{h^4}{(1-u)^4}+\frac{h^2}{\lambda^2}C_d\Big)(1+\gamma^4h^4)^m\bigg)
\end{align}
where the second inequality follows by properties of geometric series.
By the same argument, we also have
	\begin{align}
		\E\Big[\big\|\widetilde{Z'}^{\lambda,n}_{M,h}\big\|^4\Big]
		\leq C\bigg((1+\gamma^4h^4)^M+\Big(\frac{h^4}{(1-u)^4}+\frac{h^2}{\lambda^2}C_d\Big)(1+\gamma^4h^4)^M\bigg).\label{z'moment}
	\end{align}
	
	Thus, we have
	\begin{align}
		\frac{\gamma^2}{N}\sum_{n=1}^N\Big(\E\Big[\big\|\widetilde{Z}^{\lambda,n}_{M,h}\big\|^4\Big]+\E\Big[\big\|\widetilde{Z'}^{\lambda,n}_{M,h}\big\|^4\Big]\Big)
		\leq \gamma^2 C^{'5}_{(u,t,\gamma)},
		\label{z'second}
\end{align}
where $C^{'5}_{(u,t,\gamma)}$ is given in \eqref{C'5}.
\medskip

	Let us now turn to the first term on the right hand side of \eqref{z'}. 
	Adding and subtracting $\widetilde{L}(\widetilde{Z'}^{\lambda,n}_{M,h})$, we have 
	\begin{align}
		&\E\Big[ \Big(\widetilde{L}(\widetilde{Z}^{\lambda,n}_{M,h})-\widetilde{\ell}(\widetilde{Z'}^{\lambda,n}_{M,h})\Big)^2 \Big]
		\leq 2\E\Big[\Big(\widetilde{L}(\widetilde{Z}^{\lambda,n}_{M,h})-\widetilde{L}(\widetilde{Z'}^{\lambda,n}_{M,h}) \Big)^2\Big]+2\E\Big[\Big(\widetilde{L}(\widetilde{Z'}^{\lambda,n}_{M,h})-\widetilde{\ell}(\widetilde{Z'}^{\lambda,n}_{M,h}) \Big)^2\Big]\label{z'2}.
	\end{align}
 Using that $\widetilde{L}$ is $\frac{2}{1-u}$--Lipschitz, it follows that
	\begin{align*}
		\E\Big[\Big(\widetilde{L}(\widetilde{Z}^{\lambda,n}_{M,h})-\widetilde{L}(\widetilde{Z'}^{\lambda,n}_{M,h})\Big)^2\Big]
		&\leq \frac{4}{(1-u)^2} \E \left[\big\|\widetilde{Z}^{\lambda,n}_{M,h}-\widetilde{Z'}^{\lambda,n}_{M,h}\big\|^2\right]
  \end{align*}
   Next, we will bound the fourth moment of the difference between $\widetilde{Z}^{\lambda,n}_{M,h}$ and $\widetilde{Z'}^{\lambda,n}_{M,h}$. Using the definitions of $\widetilde{Z}^{\lambda,n}_{M,h}$ and $\widetilde{Z'}^{\lambda,n}_{M,h}$, given in \Cref{eq:SGLD} and \Cref{Milstein} respectively, we have 
	\begin{align*}
		&\E \left[\big\|\widetilde{Z}^{\lambda,n}_{M,h}-\widetilde{Z'}^{\lambda,n}_{M,h}\big\|^4\right]
		=\E\left[\big\|\widetilde{Z}^{\lambda,n}_{M-1,h}-h\nabla L(\widetilde{Z}^{\lambda,n}_{M-1,h})-\widetilde{Z'}^{\lambda,n}_{M-1,h}+h\nabla \ell(\widetilde{Z'}^{\lambda,n}_{M-1,h})\big\|^4\right]\nonumber\\
		\quad &\leq C\E \left[\big\|\widetilde{Z}^{\lambda,n}_{M-1,h}-\widetilde{Z'}^{\lambda,n}_{M-1,h}\big\|^4\right]+Ch^4\E\left[\big\|\nabla L(\widetilde{Z}^{\lambda,n}_{M-1,h})-\nabla \ell(\widetilde{Z'}^{\lambda,n}_{M-1,h})\big\|^4\right].
	\end{align*}
	Using this recursive relationship, it can be checked by induction that we have 
	\begin{equation}
		\E \left[\big\|\widetilde{Z}^{\lambda,n}_{M,h}-\widetilde{Z'}^{\lambda,n}_{M,h}\big\|^4\right]\leq C\sum_{m=0}^{M-1}h^4\E\left[\big\|\nabla L(\widetilde{Z}^{\lambda,n}_{m,h})-\nabla \ell(\widetilde{Z'}^{\lambda,n}_{m,h})\big\|^4\right]. \label{z'fourth}
	\end{equation}
	Let $\widetilde{Z}^{\lambda,n,d-1}_{M,h}$ (resp. $\widetilde{Z'}^{\lambda,n,d-1}_{M,h}$) denote the last $d-1$ coordinates of $\widetilde{Z}^{\lambda,n}_{M,h}$ (resp. $\widetilde{Z'}^{\lambda,n}_{M,h}$).
	Define the vector $\boldsymbol{S} :=(-1,S^1,\cdots,S^d)$, and let $e_1\in \RR^{d}$ be a vector with $1+\gamma m$ in the first entry and $0$ everywhere else. Adding and subtracting $\nabla L(\widetilde{Z'}^{\lambda,n}_{M,h})$, we get
	\begin{align}
		&\E\left[\big\|\nabla L(\widetilde{Z}^{\lambda,n}_{M,h})-\nabla \ell(\widetilde{Z'}^{\lambda,n}_{M,h})\big\|^4\right]\nonumber\\
		\leq& C\E\left[\big\|\nabla L(\widetilde{Z}^{\lambda,n}_{M,h})-\nabla L(\widetilde{Z'}^{\lambda,n}_{M,h})\big\|^4\right]+ C\E\left[\big\|\nabla L(\widetilde{Z'}^{\lambda,n}_{M,h})-\nabla \ell(\widetilde{Z'}^{\lambda,n}_{M,h})\big\|^4\right]\nonumber\\
		\leq& C\left(\frac{1}{1-u}+\gamma\right)^4\E \left[\big\|\widetilde{Z}^{\lambda,n}_{M,h}-\widetilde{Z'}^{\lambda,n}_{M,h}\big\|^4\right]\nonumber\\
		&+C\E\bigg[\bigg|\Big(e_1+\frac{1}{1-u}\E\big[\nabla_r f(\widetilde{Z'}^{\lambda,n,d-1}_{M,h}, S) 1_{\{f(\widetilde{Z'}^{\lambda,n,d-1}_{M,h}, S) \geq \widetilde{Z'}^{\lambda,n}_{M,h}\}}|\widetilde{Z'}^{\lambda,n}_{M,h}\big] \Big)\\
		&- \Big(e_1+\frac{1}{1-u}\frac{1}{N}\sum_{i=1}^N (\nabla f(\widetilde{Z'}^{\lambda,n,d-1}_{M,h}, S^i) 1_{\{f(\widetilde{Z'}^{\lambda,n,d-1}_{M,h}, S^i)\geq\widetilde{Z'}^{\lambda,n}_{M,h}(1)\}})\Big)\bigg|^4\bigg]\nonumber\\
		\leq& C\bigg(\frac{1}{1-u}+\gamma\bigg)^4\E \bigg[\big\|\widetilde{Z}^{\lambda,n}_{M,h}-\widetilde{Z'}^{\lambda,n}_{M,h}\big\|^4\bigg]+C\bigg(\frac{1}{1-u}\bigg)^4,\label{z'gradient}
\end{align}
where the last inequality follows by Assumption \ref{bounded_gradient_f} $(iv)$
.
Putting equations \eqref{z'fourth}, \eqref{z'gradient} together, we have 
\begin{equation*}
\E \left[\big\|\widetilde{Z}^{\lambda,n}_{M,h}-\widetilde{Z'}^{\lambda,n}_{M,h}\big\|^4\right]
\leq Ch^2\left(\frac{1}{1-u}+\gamma\right)^4 \sum_{m=0}^{M-1}\E \left[\big\|\widetilde{Z}^{\lambda,n}_{m,h}-\widetilde{Z'}^{\lambda,n}_{m,h}\big\|^4\right]+C Mh^4\left(\frac{1}{1-u}\right)^4.
\end{equation*}
Using the discrete version of the Gr\"onwall's inequality \citep[Proposition 5]{discretegronwall}, we have 
	\begin{equation}
		\E \left[\big\|\widetilde{Z}^{\lambda,n}_{M,h}-\widetilde{Z'}^{\lambda,n}_{M,h}\big\|^4\right]
		\leq CMh^4\left(\frac{1}{1-u}\right)^4 \exp\left(CMh^4\left(\frac{1}{1-u}+\gamma\right)^4\right).\label{z'diff}
	\end{equation}
	Therefore, it follows that
	\begin{align*}
		\E\Big[\Big(\widetilde{L}(\widetilde{Z}^{\lambda,n}_{M,h})-\widetilde{L}(\widetilde{Z'}^{\lambda,n}_{M,h})\Big)^2\Big]
		&\le \frac{4}{(1-u)^4}M^{1/2}h^2\exp\left(CMh^4\left(\frac{1}{1-u}+\gamma\right)^4\right).
	\end{align*}

	For the second term on the right hand side of \eqref{z'2}, we use a law of large number type argument.
	In fact, we have
	\begin{align}
		&\E\Big[(\widetilde{L}(\widetilde{Z'}^{\lambda,n}_{M,h})-\widetilde{\ell}(\widetilde{Z'}^{\lambda,n}_{M,h}))^2\Big]
		=\E\bigg[ \E\Big[(\widetilde{L}(\widetilde{Z'}^{\lambda,n}_{M,h})-\widetilde{\ell}(\widetilde{Z'}^{\lambda,n}_{M,h}))^2|\widetilde{Z'}^{\lambda,n}_{M,h}\Big] \bigg]\nonumber\\
		\qquad &=\E\left[\E\left[\left(\frac{1}{1-u} \E\Big[(f(\widetilde{Z'}^{\lambda,n,d-1}_{M,h},S)-\widetilde{Z'}^{\lambda,n}_{M,h})^+\Big]-\frac{1}{1-u}\frac{1}{N}\sum_{i=1}^N(f(\widetilde{Z'}^{\lambda,n,d-1}_{M,h}, S^i)-\widetilde{Z'}^{\lambda,n}_{M,h})^+\right)^2\Bigg| \widetilde{Z'}^{\lambda,n}_{M,h}\right]\right]\nonumber\\
		\qquad&= \left(\frac{1}{1-u}\right)^2\frac{1}{N^2}\E\Bigg[\E\Bigg[\sum_{i,j=1}^N\left((f(\widetilde{Z'}^{\lambda,n,d-1}_{M,h}, S^i)-\widetilde{Z'}^{\lambda,n}_{M,h})^+-\E\Big[(f(\widetilde{Z'}^{\lambda,n,d-1}_{M,h},S)-\widetilde{Z'}^{\lambda,n}_{M,h})^+|\widetilde{Z'}^{\lambda,n}_{M,h}\Big]\right)\nonumber\\
		&\qquad\qquad \cdot\left((f(\widetilde{Z'}^{\lambda,n,d-1}_{M,h}, S^j)-\widetilde{Z'}^{\lambda,n}_{M,h})^+- \E\Big[(f(\widetilde{Z'}^{\lambda,n,d-1}_{M,h},S)-\widetilde{Z'}^{\lambda,n}_{M,h}(1))^+|\widetilde{Z'}^{\lambda,n}_{M,h}\Big]\right)\Big|\widetilde{Z'}^{\lambda,n}_{M,h}\Bigg]\Bigg]
	\end{align}
	For $i \neq j$, using that $S^i$ and $S^j$ are independent, we have
	\begin{align*}
		&\E\Bigg[\E\Bigg[\left((f(\widetilde{Z'}^{\lambda,n,d-1}_{M,h}, S^i)-\widetilde{Z'}^{\lambda,n}_{M,h})^+-\E\Big[f(\widetilde{Z'}^{\lambda,n,d-1}_{M,h}, S)-\widetilde{Z'}^{\lambda,n}_{M,h})^+|\widetilde{Z'}^{\lambda,n}_{M,h}\Big]\right)\\
		&\cdot\left((f(\widetilde{Z'}^{\lambda,n,d-1}_{M,h}, S^j)-\widetilde{Z'}^{\lambda,n}_{M,h})^+- \E\Big[(f(\widetilde{Z'}^{\lambda,n,d-1}_{M,h}, S)-\widetilde{Z'}^{\lambda,n}_{M,h})^+|\widetilde{Z'}^{\lambda,n}_{M,h}\Big]\right)\Big|\widetilde{Z'}^{\lambda,n}_{M,h}\Big]\Bigg]\\
		=&\E\Bigg[\E\Big[(f(\widetilde{Z'}^{\lambda,n,d-1}_{M,h}, S^i)-\widetilde{Z'}^{\lambda,n}_{M,h})^+-\E\Big[(f(\widetilde{Z'}^{\lambda,n,d-1}_{M,h}, S)-\widetilde{Z'}^{\lambda,n}_{M,h}(1))^+|\widetilde{Z'}^{\lambda,n}_{M,h}\Big]\Big|\widetilde{Z'}^{\lambda,n}_{M,h}\Big]\Bigg]\\
		&\cdot\E\Bigg[\E\Big[(f(\widetilde{Z'}^{\lambda,n,d-1}_{M,h}, S^j)-\widetilde{Z'}^{\lambda,n}_{M,h}(1))^+-\E\Big[(f(\widetilde{Z'}^{\lambda,n,d-1}_{M,h}, S)-\widetilde{Z'}^{\lambda,n}_{M,h})^+|\widetilde{Z'}^{\lambda,n}_{M,h}\Big]\Big|\widetilde{Z'}^{\lambda,n}_{M,h}\Big]\Bigg]
	=0.
	\end{align*}
	Therefore, we can estimate the desired term as
	\begin{align}
		&\E\Big[\Big(\widetilde{L}(\widetilde{Z'}^{\lambda,n}_{M,h})-\widetilde{\ell}(\widetilde{Z'}^{\lambda,n}_{M,h})\Big)^2\Big]\\ &=\left(\frac{1}{1-u}\right)^2\frac{1}{N^2}\sum_{i=1}^N\E\Bigg[\E\Big[\left((f(\widetilde{Z'}^{\lambda,n,d-1}_{M,h}, S^i)-\widetilde{Z'}^{\lambda,n}_{M,h})^+-\E\Big[(f(\widetilde{Z'}^{\lambda,n,d-1}_{M,h}, S)-\widetilde{Z'}^{\lambda,n}_{M,h})^+|\widetilde{Z'}^{\lambda,n}_{M,h}\Big]\right)^2\Big|\widetilde{Z'}^{\lambda,n}_{M,h}\Big]\Bigg]\nonumber\\	
		&=\left(\frac{1}{1-u}\right)^2\frac{C}{N}\E\bigg[\E\Big[\big| f(\widetilde{Z'}^{\lambda,n,d-1}_{M,h},S)\big|^2\Big| \widetilde{Z'}^{\lambda,n}_{M,h}\Big]\bigg]\leq \left(\frac{1}{1-u}\right)^2\frac{C}{N}\E\Big[1+|S|^2\Big]\leq \frac{C}{N}f\big(1+C^{'4}_{(u,t)}\big),\label{z'first}
	\end{align}
	with $C^{'4}_{(u,t)}$ given in \eqref{C'4}.
	Here the third to last step follows by Jensen's inequality and tower property. The penultimate step follows by the boundedness of $\nabla_r f(r,S)$ and Lipschitzness of $\nabla_s f(r,S)$. In the last step, we used that $S$ has finite fourth moment, and equation \eqref{eq:4.moment}.
	Finally, putting \eqref{z'}, \eqref{z'second} and \eqref{z'first} together yields the lemma.  
\end{proof}
We now investigate the second term on the right hand side of \eqref{decomp2}.

\begin{lemma}\label{term1l}
	Under the conditions of \Cref{thm:avar}, for all $t,\gamma>0, u\in(0,1), M,N \in \N^\star$ and $\lambda>1$, 
	we have 
	\begin{align*}
\E \bigg[\bigg|\frac{1}{N} \sum_{n=1}^N L(\widetilde{Z}^{\lambda,n}_{M,h})-\frac{1}{N}\sum_{n=1}^N L(\widehat{Z}^{\lambda,n}_t)\bigg|^2\bigg]
\leq 
C^{'6}_{(u)}h^2+C^{'7}_{(u,M,\lambda)}\gamma^2,
\end{align*}
where constants $C^{'6}_{(u)}$, and $C^{'7}_{(u,M,\lambda)}$  are given in Equations \eqref{C'6} and \eqref{C'7}.
\end{lemma}
\begin{proof}
	Following the same argument as in the proof of Lemma 3.3, we have
	\begin{align}
		&\E \bigg[\bigg|\frac{1}{N} \sum_{n=1}^N L(\widetilde{Z}^{\lambda,n}_{M,h})-\frac{1}{N}\sum_{n=1}^N L(\widehat{Z}^{\lambda,n}_t)\bigg|^2\bigg]
		\leq C\bigg(\frac{1}{N}\sum_{n=1}^N \E\Big[\Big(\widetilde{L}(\widetilde{Z}^{\lambda,n}_{M,h})-\widetilde{L}(\widehat{Z}^{\lambda,n}_{t})\Big)^2 \Big]+\frac{\gamma^2}{N}\sum_{n=1}^N \E \Big[\Big( \big\|\widetilde{Z}^{\lambda,n}_{M,h}\big\|^2-\big\|\widehat{Z}^{\lambda,n}_{t}\big\|^2\Big)^2\Big]\bigg).\label{lem32}
	\end{align}
	Let $Z^\lambda(s)$ be a continuous time approximation of the Euler-Maruyama scheme in (\ref{Milstein}). One way to define such an approximation is by setting 
	\begin{equation*}
		Z^\lambda(s) := z- \int_{0}^{n_s}\nabla L(Z^\lambda(r))\diff r +\int_0^{n_s}\sqrt{2\lambda^{-1}}\diff W_r,
	\end{equation*}	
	with $n_s := \max\{\frac{t}{M^2}n:\frac{t}{M^2}n\leq s, n\in \Z\}$ .
	Note that for each $1\leq i \leq N, 1 \leq m \leq M$, and $t>0$, we have $Z^\lambda(s_m)=\widetilde{Z}_{m,h}^{\lambda,i}$. 
	In other words, $Z^\lambda(s)$ coincides with $\widetilde{Z}_{m,h}^{\lambda,i}$ at the time discretization points.
	For the first term in \eqref{lem32}, we have
	\begin{align*}
		\frac{1}{N}\sum_{n=1}^N \E\Big[|\widetilde{L}(\widetilde{Z}^{\lambda,n}_{M,h})-\widetilde{L}(\widehat{Z}^{\lambda,n}_{t}) |^2 \Big]
		&\leq \frac{4}{(1-u)^2N}\sum_{n=1}^N \E \Big[\big\|\widetilde{Z}_{M,\frac{t}{M^2}}^{\lambda,n}-\widehat{Z}_t^{\lambda,n}\big\|^2\Big]\\
		&\leq \frac{4}{(1-u)^2N}\sum_{n=1}^N \E \Big[\sup_{0 \leq s \leq t} \big\|Z^\lambda(s)-\widehat{Z}_s^{\lambda,n}\big\|^2\Big],
	\end{align*}
	where we used that $\widetilde{L}$ is $\frac{2}{1-u}$--Lipschitz in the first inequality. 
	By standard results on the error estimation for SDE approximations, see e.g.\ \citep[Theorems 10.3.5 and 10.6.3]{num}, we have 
	\begin{equation}
		\E \Big[\sup_{0 \leq s \leq t} |Z^\lambda(s)-\widehat{Z}_s^{\lambda,n}|^4\Big]
		\leq Ch^4\label{mil}
	\end{equation}
	for some constant $C>0$.
	Thus, we have 
	\begin{equation}\label{tildeeuler}
		\frac{1}{N}\sum_{n=1}^N \E\Big[|\widetilde{L}(\widetilde{Z}^{\lambda,n}_{M,h})-\widetilde{L}(\widehat{Z}^{\lambda,n}_{t}) |^2\Big] \leq  \frac{Ch^2}{(1-u)^2}.
	\end{equation}

	For the second term on the right hand side of \eqref{lem32}, by Cauchy-Schwartz inequality, we have
	\begin{align}
		\E \Big[\Big( \big\|\widetilde{Z}^{\lambda,n}_{M,h}\big\|^2-\big\|\widehat{Z}^{\lambda,n}_{t}\big\|^2\Big)^2\Big]\bigg)
		& \leq  \E\Big[\big\| \widetilde{Z}^{\lambda,n}_{M,h}\big\|^4+\big\|\widehat{Z}^{\lambda,n}_{t}\big\|^4\Big]^{\frac{1}{2}} \E\Big[ \big\|\widetilde{Z}^{\lambda,n}_{M,h}-\widehat{Z}^{\lambda,n}_{t}\big\|^4\Big]^{\frac{1}{2}}\nonumber\\
		&\leq  \E\Big[\big\| \widetilde{Z}^{\lambda,n}_{M,h}\big\|^4]+ \big\|\widehat{Z}^{\lambda,n}_{t}\big\|^4\Big]^{\frac{1}{2}}\E\Big[ \sup_{0 \leq s \leq t}\big\|Z^\lambda (s)-\widehat{Z}^{\lambda,n}_{s}\big\|^4\Big]^{\frac{1}{2}}\nonumber\\
		&\leq C \E\Big[\big\| \widetilde{Z}^{\lambda,n}_{M,h}\big\|^4 +\big\|\widehat{Z}^{\lambda,n}_{t}\big\|^4\Big]^{\frac{1}{2}}h^2\label{eulersol}
	\end{align}
	where we used equation \eqref{mil} in the last step.

\medskip

	It remains to control the fourth moment of $\widehat{Z}^{\lambda,n}_{t}$, since that of $\widetilde{Z}^{\lambda,n}_{M,h}$ was bounded in Equation \eqref{eq:4.moment}.
	Since $\widehat{Z}^{\lambda,n}_{t}$ solves the SDE 
	$$\diff\widehat{Z}^{\lambda,n}_t=-\nabla L(\widehat{Z}^{\lambda,n}_t)\diff t+\sqrt{2 \lambda^{-1}}\diff \widetilde{W}^n_t,$$ 
	with a linearly growing drift, the following bound on the fourth moment of the solution follows by standard arguments:
	\begin{align}\label{**}
		\E\Big[\big\|\widehat{Z}^{\lambda,n}_t\big\|^4] &\leq C\left(1+ \frac{t}{(1-u)^4}+\frac{1}{\lambda^2}t^2\right)e^{\gamma^4Ct},
	\end{align}
 where $C$ is a constant depending only on the Lipschitz constant of $\nabla L$.
 	We omit the proof.
	Putting equations \eqref{tildeeuler}, \eqref{eulersol}, \eqref{**} and \eqref{eq:4.moment} together, and recalling that $t=hM^2$, we have the result of the lemma.
	\end{proof}
Now we move on to analyzing the third term in \eqref{decomp2}.
\begin{lemma}\label{term2l}
	Under the conditions of \Cref{thm:avar}, for all $h,t>0, u\in(0,1), N \in \N^\star$, $0<\gamma<1$, and $\lambda>1$, we have 
	\begin{align*}
		\E \bigg[\bigg|\frac1N \sum_{n=1}^N L(\widehat{Z}^{\lambda,n}_t)-\E[L(Z_t^\lambda)]\bigg|^2\bigg]
		\leq \frac{1}{N}C^{'8}_{(u,M,\lambda,t)}
	\end{align*}
	for $C^{'8}_{(u,M,\lambda,t)}$ given in Equation \eqref{C'8}.
\end{lemma}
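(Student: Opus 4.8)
The plan is to exploit that, by their very construction in \eqref{eq:def.Z.lambda} through the independent Brownian motions $\widetilde W^n$ together with strong uniqueness for the Langevin SDE \eqref{SDE}, the processes $\widehat Z^{\lambda,1},\dots,\widehat Z^{\lambda,N}$ are i.i.d.\ copies of $Z^\lambda$ started at $z$. Consequently $L(\widehat Z^{\lambda,1}_t),\dots,L(\widehat Z^{\lambda,N}_t)$ are i.i.d.\ real random variables with common mean $\E[L(Z^\lambda_t)]$, and the quantity to be controlled is merely the variance of an empirical mean:
\begin{equation*}
	\E\bigg[\Big|\tfrac1N\sum_{n=1}^N L(\widehat Z^{\lambda,n}_t)-\E[L(Z^\lambda_t)]\Big|^2\bigg]=\frac1N\mathrm{Var}\big(L(Z^\lambda_t)\big)\le \frac1N\E\big[L(Z^\lambda_t)^2\big].
\end{equation*}
So the lemma reduces to an a priori second-moment bound on $L(Z^\lambda_t)$ that does not depend on $N$.

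For that, recall from \eqref{eq:object.L} that $L(z)=\widetilde L(z)+\tfrac\gamma2 |z^{(1)}|^2$, that $\widetilde L$ is $\tfrac{2}{1-u}$--Lipschitz (hence $|\widetilde L(z)|\le |\widetilde L(0)|+\tfrac{2}{1-u}\|z\|$), and that $0<\gamma<1$. Therefore $|L(z)|^2\le C\big(1+\tfrac{1}{(1-u)^2}\|z\|^2+\gamma^2\|z\|^4\big)\le C_{(u)}\big(1+\|z\|^4\big)$, and it only remains to control $\E[\|Z^\lambda_t\|^4]$. Since $Z^\lambda$ solves \eqref{SDE} and the drift $-\nabla L$ has at most linear growth — the $\nabla\widetilde L$ part being bounded by \Cref{ass.f.S}$(iv)$ and the penalty contributing the linear term $\gamma z^{(1)}$ — the standard Itô/Gr\"onwall moment estimate for SDE solutions, i.e.\ exactly the argument producing \eqref{**} for the identically-distributed process $\widehat Z^{\lambda,n}$, gives $\E[\|Z^\lambda_t\|^4]\le C\big(1+\tfrac{t}{(1-u)^4}+\tfrac{t^2}{\lambda^2}\big)e^{C\gamma^4 t}\le C_{(\lambda,u,t)}$, again using $\gamma<1$. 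Substituting this into the previous display yields $\E[L(Z^\lambda_t)^2]\le C_{(\lambda,u,t)}$ and hence the claimed bound $\tfrac{C_{(\lambda,u,t)}}{N}$.

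There is no genuine difficulty here; the only mildly technical ingredient is the fourth-moment estimate for the diffusion, which is entirely classical and has already been invoked as \eqref{**} in the proof of \Cref{term1l}. The one point requiring care is tracking the dependence of the constant: it depends on $\lambda$, $u$ and $t$ but not on $N$ nor on the dimension $d$, since the linear-growth constant of the drift is governed by $\gamma$ and $\tfrac{2}{1-u}$, and the additive contribution of the diffusion coefficient $\sqrt{2\lambda^{-1}}$ is dimension-free.
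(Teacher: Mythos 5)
Your proof is correct and reaches the same conclusion, but it differs from the paper's in one ingredient: after the (identical) i.i.d.\ reduction
\begin{equation*}
\E\bigg[\Big|\tfrac1N\sum_{n=1}^N L(\widehat Z^{\lambda,n}_t)-\E[L(Z^\lambda_t)]\Big|^2\bigg]\le\frac1N\mathrm{Var}\big(L(Z^\lambda_t)\big),
\end{equation*}
you bound the variance crudely by the second moment $\E[L(Z^\lambda_t)^2]\le C_{(u)}(1+\E[\|Z^\lambda_t\|^4])$ and invoke \eqref{**}. The paper instead exploits the Langevin structure: since $\nabla L$ is Lipschitz, the law of $Z^\lambda_t$ satisfies a Poincar\'e inequality (via \citep[Corollary 5.11]{Poincare}), which yields $\mathrm{Var}(L(Z^\lambda_t))\le \frac{2}{\lambda(\frac{1}{1-u}+\gamma)^2}\E[\|\nabla L(Z^\lambda_t)\|^2]$, and then controls $\E[\|\nabla L(Z^\lambda_t)\|^2]$ via linear growth and \eqref{**}. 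Both routes establish the stated lemma. The trade-off is transparency versus sharpness: your argument is shorter and purely elementary, while the Poincar\'e route produces an explicit $\lambda^{-1}$ prefactor in the constant, reflecting the concentration of the Gibbs measure at low temperature; this refinement is not needed for the lemma as stated (the constant is simply written $C_{(\lambda,u,t)}$), so your proof suffices.
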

\begin{proof}
	Since $(\widehat{Z}^{\lambda,n})_{n\ge 1}$ are i.i.d. copies of $Z_t^\lambda$, a standard law of large numbers argument gives 
	\begin{align*}
		\E \bigg[\bigg|\frac1N \sum_{n=1}^N L(\widehat{Z}^{\lambda,n}_t)-\E[L(Z_t^\lambda)]\bigg|^2\bigg]
		\leq\frac1N \mathrm{Var}(L(Z^\lambda_t))
	\end{align*}
	where $\mathrm{Var}(L(Z^{\lambda}_t))$ is the variance of $L(Z^\lambda_t)$.
	Since $\nabla L$ is $(\frac{1}{1-u}+\gamma)$--Lipschitz, it follows by \citep[Corollary 5.11]{Poincare}, that the law of $Z^\lambda_t$ satisfies the Poincar\'e inequality.
	That is,
	$$\mathrm{Var}(L(Z_t^\lambda))\leq \frac{2}{\lambda(\frac{1}{1-u}+\gamma)^2}\E \Big[\big\|\nabla L(Z_t^\lambda)\big\|^2\Big].$$
	Now using that $\nabla \widetilde{L}$ is $\frac{2}{1-u}$--Lipschitz, we have $\big\|\nabla L(x)\big\|\leq \|\nabla \widetilde{L}(0)\|+(\frac{2}{1-u}+\gamma)\|x\|$. 
	Thus, it follows that
	\begin{align*}
		\mathrm{Var}(L(Z_t^\lambda))&\leq \frac{2}{\lambda(\frac{1}{1-u}+\gamma)^2}\E\bigg[\bigg|\|\nabla \widetilde{L}(0)\|+\Big(\frac{2}{1-u}+\gamma\Big)\big\|Z^\lambda_t\big\|^2\bigg|\bigg]\\
		&\leq C\bigg(\frac{1}{\lambda(\frac{1}{1-u}+\gamma)^2}+\frac{1}{\lambda}\Big(\E\Big[\big\|Z_t^\lambda\big\|^4\Big]\Big)^{\frac{1}{2}}\bigg)\\
		&\leq C\bigg(\frac{1}{\lambda(1+\gamma)^2}+\frac{1}{\lambda}\bigg(1+\frac{\sqrt{t}}{(1-u)^2}+\frac{t}{\lambda}\bigg) e^{\gamma^4C t}\bigg),
	\end{align*}
	where the last inequality follows from \eqref{**}.
	Taking $\gamma \geq 0$ yields the result of the lemma.
\end{proof}
In the next lemma we analyze the fourth term in \eqref{decomp2}.
Essentially, this concerns the rate of convergence of the law $\mu_t^\lambda$ of the solution of the Langevin equation to its invariant measure $\mu^\lambda_\infty$.
\begin{lemma}\label{term3l}
	For all $t>0, u,\gamma\in(0,1)$, $\lambda>1$ and an initial position $Z^\lambda_0=z\in \R^{d+1}$, we have 
	\begin{align*}
		\bigg|\E L(Z^\lambda_t)-\int L(x)\diff \mu_\infty^\lambda\bigg|^2 &\leq C^{4}_{(u,\lambda)}e^{- tC^{5}_{(\lambda)}}+ \gamma^2 C^{'9}_{(\lambda,t)},
	\end{align*}
 where constants $C^{4}_{(u,\lambda)}, C^{5}_{(\lambda)}$, and $C^{'9}_{(\lambda,t)}$ are given in equations \eqref{C4} - \eqref{C'9}.
\end{lemma}
\begin{proof}
	The investigation of convergence rates to the invariance measure is an active research area, see e.g.\ \citep{Bol-Gen-Gui12}.
	In the present case with non-convex potential functions $\nabla L$, this follows from the so-called coupling by reflection arguments of \citeauthor*{EberleCRAS} \cite{EberleCRAS}.
	In fact, by \Cref{ass.f.S}.\,$(iv)$, it follows from \cite[Corollary 2.1]{EberleCRAS}, (see also \cite[Corollary 2]{EberlePTRF}) that there is a constant $C_{(\lambda)}>0$ depending only on $d,\gamma$ and $\lambda$ such that
	\begin{equation}
	\label{eq:spectral.gap1}
	 	\cW_1(\mu^\lambda_t,\mu_\infty^\lambda)\le C_{(d,\lambda)}e^{-tC_{(d,\lambda)}}\cW_1(\delta_z,\mu^\lambda_\infty) \quad \text{for all } t>0.
	 \end{equation}

\medskip

	It now remains to bound $|\E L(Z^\lambda_t)-\int L(x)d\mu_\infty^\lambda|$ by $\mathcal{W}_1(\mu^\lambda_t, \mu^\lambda_\infty)$.
	Let $\hat\alpha \in \Gamma(\mu_\infty^\lambda,\mu_t^\lambda)$ be an optimal coupling of $\mu_\infty^\lambda$ and $\mu_t^\lambda$, i.e. such that 
	$$\mathbb{E}_{\hat\alpha}\|X-Y\|=\inf_{\alpha \in \Gamma(\mu_\infty^\lambda,\mu_t^\lambda)}\mathbb{E}_\alpha\|X-Y\|,$$
	see e.g.\ \cite{Villani} for existence of $\hat\alpha$.
	Above, we denote by $\E_\alpha[\|X-Y\|]$ the expectation under $\alpha$ of $\|X-Y\|$, with $(X,Y) \sim \alpha$.
	By definition of $L$ and Lipschitz--continuity of $\widetilde L$, it holds
	$$|L(x)-L(y)| 
		\leq \frac{2}{1-u}\|x-y\|+ \frac{\gamma}{2}(\|x\|^2 + \|y\|^2). 
		$$  
	Taking the expectation with respect to $\hat\alpha \in \Gamma(\mu_\infty^\lambda,\mu_t^\lambda)$ we have
	\begin{align}
	\nonumber
		\mathbb{E}_{\hat\alpha} \Big[|L(X)-L(Y)|\Big]
		&\leq \frac{2}{1-u}\mathbb{E}_{\hat\alpha} \Big[\|X - Y\|\Big] + \frac{\gamma}{2} \E_{\hat\alpha}\Big[\|X\|^2+ \|Y\|^2\Big]\\	
		\label{eq:spectral.gap.2}
		&\le \frac{2}{1 - u}\cW_1(\mu_\infty^\lambda,\mu_t^\lambda) + \frac{\gamma}{2}\Big(\E[\|Z^{\lambda}_\infty\|^2] + \E[\big\|Z^\lambda_t\big\|^2]\Big), 
	\end{align}
where $Z^{\lambda}_\infty \sim \mu^{\lambda}_\infty$.
	As in the proof of \Cref{term1l}, see e.g.\ \Cref{**}, $Z^\lambda_t$ has second moment bounded by a constant $C_{(\lambda,t)}>0$.
	Concerning the term $\mathbb{E}[\|Z^\lambda_\infty\|^2]$, note that it holds $$\mathbb{E}[\|Z^\lambda_\infty\|^2]=\int \|x\|^2 \diff\mu_\infty^\lambda=\int \|x\|^2\frac{e^{-\lambda L(x)}}{\int e^{-\lambda L(a)}\diff a}\diff x.$$
	Since $\widetilde{L}$ is $\frac{2}{1-u}$-Lipschitz, $|\widetilde{L}(x)|\leq C+\frac{2}{1-u} \|x\|$ for some constant $C$. 
	We thus have 
	\begin{align}
    	\int \|x\|^2e^{-\lambda L(x)}\diff x = \int \|x\|^2 e^{-\lambda(\widetilde{L}(x)+\gamma \|x\|^2/2)}\diff x \leq \int \|x\|^2e^{C\lambda+2\lambda\|x\|/(1-u)- \|x\|^2/2}\diff x<\infty.\label{eq:momend.mu.infty}
	\end{align}
	Using the same  argument, $0<\int e^{-\lambda L(a)}\diff a<\infty$. Therefore, we have $\mathbb{E} [\|Z^\lambda_\infty\|^2]<\infty$.
	Combining this with \eqref{eq:spectral.gap1} and \eqref{eq:spectral.gap.2} yields the desired result.
\end{proof}
\begin{remark}
	If the function $f$ is convex, it follows that $L$ is strongly convex in the sense that $\nabla^2 L\geq \gamma I_{d}$, where $I_{d}$ is the $(d)\times(d)$ identity matrix.
	In this case, the exponential convergence to equilibrium follows by standard arguments, see e.g.\ \citep{Bol-Gen-Gui12}.
	It fact, we have the following bound is second order Wasserstein distance:
	\begin{equation}
	\label{eq:spectral.gap1222}
		\mathcal{W}_2^2(\mu^{\lambda}_t, \mu^{\lambda}_\infty) \le e^{-2\gamma t}\mathcal{W}_2^2(\delta_z, \mu^\lambda_\infty).
	\end{equation}
	In this case, a slight modification of the above arguments allow to get the bound
	\begin{align*}
		\bigg|\E L(Z^\lambda_t)-\int L(x)\diff \mu_\infty^\lambda\bigg|^2 &\leq C\left(\frac{1}{(1-u)^2}+1\right)e^{-\gamma t}\mathcal{W}_2^2(\delta_z, \mu^\lambda_\infty).
	\end{align*}
\end{remark}
The estimation of the fifth term in the decomposition \eqref{decomp2} is an immediate consequence of the existence of second moment of the invariant measure $\mu_\infty^\lambda$ obtained in the proof of the preceding lemma.
In fact, by definition of $L$ and $\widetilde L$ we have
\begin{equation}
\label{eq:diff.L.Ltilde}
	\bigg|\int L (x) \diff\mu_\infty^\lambda-\int \widetilde{L}(x) \diff\mu_\infty^\lambda \bigg|^2= \left(\frac{\gamma}{2}\right)^2\int \|x\|^2 \diff\mu_\infty^\lambda
\leq C\gamma^2,
\end{equation}
for a constant $C>0$.
We conclude the proof of the theorem with the following lemma estimating the last term on the right hand side of \eqref{decomp2}.

\begin{lemma}\label{term5l}
	Under the conditions of \Cref{thm:avar}, we have
	\begin{align} \label{error5}
		\bigg|\int \widetilde{L}(x) \diff\mu_\infty^\lambda  - \overline{\mathrm{AVaR}(f)}\bigg|^2
		\leq  C\gamma^2+\frac{C^{'9}_{(u)}}{\lambda^2},
	\end{align}
where $C^{6}_{(u)}$ is given in Equation \eqref{C6}.
\end{lemma}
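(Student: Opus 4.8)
Write $L^\star:=\min_{\R^{d+1}}L$, which is attained because $L\geq\widetilde L$ and $\widetilde L$ has compact sublevel sets (cf.\ the proof of \Cref{Hwang}); let $z^\star$ be a minimiser of $L$ and let $z^\diamond=(r^\diamond,q^\diamond)$ minimise $\widetilde L$, so $\widetilde L(z^\diamond)=\overline{\mathrm{AVaR}}_u(f)$. The plan is to triangulate:
\begin{align*}
\int\widetilde L\,\diff\mu_\infty^\lambda-\overline{\mathrm{AVaR}}_u(f)=\Big(\int\widetilde L\,\diff\mu_\infty^\lambda-\int L\,\diff\mu_\infty^\lambda\Big)+\Big(\int L\,\diff\mu_\infty^\lambda-L^\star\Big)+\big(L^\star-\widetilde L(z^\diamond)\big),
\end{align*}
and to bound the three summands separately, using throughout that $\widetilde L$ is $\tfrac{2}{1-u}$--Lipschitz and that $q\mapsto\widetilde L(r,q)$ is convex for each fixed $r$ (since $q\mapsto(f(r,s)-q)^+$ is).

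For the third summand, comparing minimisers gives $0\le L^\star-\widetilde L(z^\diamond)\le L(z^\diamond)-\widetilde L(z^\diamond)=\tfrac{\gamma}{2}(q^\diamond)^2$. Here $q^\diamond$ is a $(1-u)$--quantile of $f(r^\diamond,S)$, and it is bounded by a constant depending only on $u$ (and the law of $f(\cdot,S)$): from the elementary lower bounds $\widetilde L(r,q)\ge q$ for $q\ge0$ and $\widetilde L(r,q)\ge \tfrac{u}{1-u}|q|$ for $q\le0$ (using $(a)^+\ge a$ and $\inf_r\mathbb E[f(r,S)]>0$) together with $\overline{\mathrm{AVaR}}_u(f)\le\widetilde L(r_0,0)$ for a fixed $r_0$, one obtains $|q^\diamond|\le C_{(u)}$. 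Hence this term is $O(\gamma)$, which after squaring and using $\gamma\le1$ gives the $C_{(u)}\gamma^2/(1+\gamma)^2$ contribution. For the first summand, which equals $-\tfrac{\gamma}{2}\int q^2\,\diff\mu_\infty^\lambda$, I would use that conditionally on $r$ the law of $q$ under $\mu_\infty^\lambda$ has density proportional to $e^{-\lambda(\widetilde L(r,q)+\frac{\gamma}{2}q^2)}$, whose potential is $\gamma\lambda$--strongly convex, so that $\mathrm{Var}(q\mid r)\le(\gamma\lambda)^{-1}$; combined with a bound on the conditional mode (again a quantile of $f(r,S)$, controlled in terms of $\|r\|$ via the Lipschitz property of $f$) and the control of $\int\|x\|^2\,\diff\mu_\infty^\lambda$ obtained as in the proof of \Cref{term3l} (see \eqref{eq:momend.mu.infty}), this yields $\int q^2\,\diff\mu_\infty^\lambda\le C_{(u)}\big(1+(\gamma\lambda)^{-1}\big)$, so the summand is $O(\gamma)+O(\lambda^{-1})$.

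The main work is the second (annealing--gap) term $\int L\,\diff\mu_\infty^\lambda-L^\star=\int(L-L^\star)\,\diff\mu_\infty^\lambda\ge0$. I would write it via the layer--cake identity as $\int_0^\infty\mu_\infty^\lambda(L-L^\star>s)\,\diff s$ and estimate the tail probability by splitting the numerator $\int_{\{L-L^\star>s\}}e^{-\lambda(L-L^\star)}\,\diff x\le e^{-\lambda s/2}\int e^{-\lambda(L-L^\star)/2}\,\diff x$, which is finite because $L$ grows superlinearly in $r$ and linearly in $q$ by \Cref{ass.f.S}$(iii)$, over the normalising constant $\int e^{-\lambda(L-L^\star)}\,\diff x$, which I would bound below by integrating over a ball of radius $\lambda^{-1}$ around $z^\star$, using the Lipschitz bound on $\widetilde L$ and the quadratic lower bound $L(r,q)\ge L(r,q^{\ast\ast}(r))+\tfrac{\gamma}{2}(q-q^{\ast\ast}(r))^2$ from the penalty (which buys an extra $(\lambda\gamma)^{-1/2}$ in the $q$--direction). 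This gives a bound of the form $\mu_\infty^\lambda(L-L^\star>s)\le C\lambda^{c}e^{-\lambda s/2}$, and splitting the $s$--integral at the crossover $s\simeq 2\lambda^{-1}\log(C\lambda^c)$ produces $\int L\,\diff\mu_\infty^\lambda-L^\star=O(\lambda^{-1})$. Collecting the three bounds and squaring yields the statement.

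\textbf{The main obstacle} is this last step. A crude application of the Lipschitz estimate $|\widetilde L(x)-\widetilde L(z^\diamond)|\le\tfrac{2}{1-u}\|x-z^\diamond\|$ only produces an $O(\lambda^{-1/2})$ error (hence $O(\lambda^{-1})$ after squaring), and even the layer--cake argument above leaves spurious logarithmic factors and a dimension--dependent constant; to reach the advertised $O(\lambda^{-2})$ with a $u$--only constant one must genuinely exploit the cancellation in $\int(L-L^\star)\,\diff\mu_\infty^\lambda$ around the minimiser (rather than bounding $|L-L^\star|$ termwise) together with the penalty--induced strong convexity in the $q$--variable, and keep careful track of the Laplace--type constant; the remaining pieces are routine.
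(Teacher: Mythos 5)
Your proposal does not complete the proof, and you are candid about this: the annealing--gap term $\int L\,\diff\mu_\infty^\lambda-L^\star$ is the whole game, and the layer--cake/tail argument you sketch leaves you with an $O(\lambda^{-1})$ bound carrying spurious logarithmic and dimension factors, whereas the statement asserts $C_{(u)}/\lambda^2$ after squaring. So this is a genuine gap, not a matter of finishing routine estimates.

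The missing idea is the differential--entropy (log--partition--function) identity. The paper does not triangulate through $L^\star$ at all; instead it starts from
\begin{equation*}
	-\int \mu_\infty^\lambda\log\mu_\infty^\lambda\,\diff x=\lambda\int L\,\diff\mu_\infty^\lambda+\log\int e^{-\lambda L(x)}\,\diff x,
\end{equation*}
rearranges to isolate $\int\widetilde L\,\diff\mu_\infty^\lambda$, and then controls the two resulting pieces directly: the entropy of $\mu_\infty^\lambda$ is upper--bounded by the maximum--entropy Gaussian bound $\tfrac{1}{2}\log\big((2\pi e)^{d+1}\int\|x\|^2\diff\mu_\infty^\lambda\big)$ (finite by \eqref{eq:momend.mu.infty}), and the normalising constant $\log\int e^{-\lambda L}\,\diff x$ is handled by factoring out $e^{-\lambda\widetilde L(z^*)}$ and lower--bounding the remaining integral by a Gaussian integral via the second--order Taylor bound $\widetilde L(x)-\widetilde L(z^*)\le\tfrac{1}{2(1-u)}\|x-z^*\|^2$, which is available because $\nabla\widetilde L$ is Lipschitz and $\nabla\widetilde L(z^*)=0$. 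This Laplace--method step is precisely the ``cancellation around the minimiser'' you say is needed; without the entropy identity to organise the computation, the termwise tail estimate you attempt cannot reproduce it. Your treatment of the penalty terms (the $\tfrac{\gamma}{2}q^2$ pieces) is essentially consistent with what the paper implicitly does, but it is a small part of the argument; the entropy/Laplace identity is the load--bearing step you are missing. Note also that the paper proves the one--sided inequality $\widetilde L(z^*)\le\int\widetilde L\,\diff\mu_\infty^\lambda$ trivially from $z^*$ being a minimiser, so only the upper bound requires work.
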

\begin{proof}
	First recall, see \Cref{opt} and \Cref{Hwang} that $\overline{\mathrm{AVaR}}(f) =  \widetilde{L}(z^*)$ where $z^* = (r^*, m^*)$ is the optimizer in \eqref{opt}.
	Now, consider the differential entropy 
	\begin{align*}
	-\int \mu_{\infty}^{\lambda}\log \mu_{\infty}^{\lambda} \diff x
		&=-\int \frac{e^{-\lambda L(x)}}{\int e^{-\lambda L(u)}\diff u} \log \frac{e^{-\lambda L(x)}}{\int e^{-\lambda L(u)}\diff u}\diff x\\
		&=-\int \frac{e^{-\lambda L(x)}}{\int e^{-\lambda L(u)}\diff u}\left(-\lambda L(x)-\log \int e^{-\lambda L(u)}\diff u\right)\diff x\\
		&=\lambda \int L(x) \diff \mu_{\infty}^{\lambda}+\log \int e^{-\lambda L(x)}\diff x\\
		&=\lambda \int \left(\widetilde{L}(x)+\frac{\gamma}{2}\|x\|^2\right)\diff \mu_\infty^\lambda +\log \int e^{-\lambda \widetilde{L}(x)-\frac{\lambda\gamma}{2}\|x\|^2}\diff x.
	\end{align*}
	Rearranging the terms gives the following expression for the integral of $\widetilde L$:
	\begin{equation}\label{laplace}
		\int \widetilde{L}(x) \diff\mu_{\infty}^{\lambda}
		=-\frac{\gamma}{2}\int \|x\|^2 \diff \mu_\infty^\lambda-\frac{1}{\lambda}\int \mu_{\infty}^{\lambda}\log \mu_{\infty}^{\lambda}\diff x-\frac{1}{\lambda}\log \int e^{-\lambda \widetilde{L}(x)-\frac{\lambda\gamma}{2}\|x\|^2}\diff x.
	\end{equation}
	Since for any continuous random variable, a Gaussian distribution with the same second moment maximizes the differential entropy (\citep[Theorem 10.48]{Yeung}), it holds
	\begin{equation}\label{ineq}
		- \int \mu_{\infty}^\lambda \log \mu_\infty^\lambda \diff x \leq \frac{1}{2}\log\left( ({2 \pi e})^{d+1} \int \|x\|^2\diff \mu_\infty^\lambda\right),
	\end{equation}
	and using that $\int \|x\|^2\diff \mu_\infty^\lambda<\infty$  (see equation \eqref{eq:momend.mu.infty}) and subtracting $\widetilde{L}(z^*)$ from both sides of \eqref{laplace}, we have
	\begin{align*}
		\int \widetilde{L}(x) \diff \mu_\infty^\lambda - \widetilde{L}(z^*)
		&\leq  C\left(-\gamma +\frac{1}{\lambda} -\frac{1}{\lambda}\log \int e^{-\lambda \widetilde{L}(x)-\frac{\gamma}{2}\|x\|^2} \diff x-\widetilde{L}(z^*)\right)\\
		&\leq C\left(\gamma +\frac{1}{\lambda}-\frac{1}{\lambda}\log \left(e^{-\lambda \widetilde{L}(z^*)}\int e^{-\lambda (\widetilde{L}(x)-\widetilde{L}(z^*)-\frac{\gamma \lambda}{2}\|x\|^2}\diff x\right)-\widetilde{L}(z^*)\right)\\
		&= C\left(\gamma +\frac{1}{\lambda}-\frac{1}{\lambda}\log \int e^{-\lambda(\widetilde{L}(x)-\widetilde{L}(z^*))-\frac{\lambda\gamma}{2}\|x\|^2}\diff x\right).
	\end{align*}
	Hence, using the fact that $\nabla \widetilde{L}$ is $\frac{1}{1-u}$-Lipschitz, we can estimate the exponent in the integral above as 
	$$|\widetilde{L}(x)-\widetilde{L}(x^*)-\nabla \widetilde{L}(z^*)\cdot (x-z^*)|\leq \frac{1}{2(1-u)}\|x-z^*\|^2.$$ 
	Thanks to the above inequality, using $\nabla \widetilde{L}(z^*)=0$ we obtain 
	\begin{align*}
		\int \widetilde{L}(x) \diff\mu_\infty^\lambda - L(z^*)
		&\leq C\bigg(\gamma +\frac{1}{\lambda}-\frac{1}{\lambda} \log \int e^{-\frac{ \lambda}{2(1-u)}\|x-z^*\|^2-\frac{\gamma \lambda}{2}\|x\|^2}\diff x\bigg)\\
		&=C\bigg(\gamma +\frac{1}{\lambda}-\frac{1}{\lambda} \log\bigg(\sqrt{\frac{2\pi}{\lambda(\gamma+\frac{1}{(1-u)}) }}e^{-\frac{\gamma \lambda \|z^*\|^2}{2(1-u)(1/(1-u)+\gamma)}}\bigg)\bigg)\\
		&\leq C\bigg(\gamma +\frac{1}{\lambda}-\frac{1}{\lambda}\log\bigg(\sqrt{\frac{2\pi}{\lambda(\gamma+\frac{1}{1-u}) }}\bigg)+\frac{\gamma}{(1-u)(\frac{1}{1-u}+\gamma)}\bigg).
	\end{align*}
	For the other side of the inequality, since $z^*$ is a minimizer of $\widetilde{L}$, we have \begin{align*}
		\widetilde{L}(x^*)-\int \widetilde{L}(x)\diff \mu_\infty^\lambda
		&\leq \widetilde{L}(z^*)-\widetilde{L}(z^*) \int \diff\mu_\infty^\lambda  = 0.
	\end{align*}
	Using $0<\gamma<1$  and $\lambda \geq 1$ concludes the proof.
\end{proof}

\section{Rate for general law invariant convex risk measures}
\label{sec:proofs.genRM}
We now focus on the estimation of the approximation error of general convex risk measures.
As explained in Section \ref{eq:approx.meth}, the main argument for the derivation of the rate is the representation of the (law invariant) convex risk measure with respect to $\mathrm{AVaR}$.
One technical difficulty is that this representation involves an integral with respect to the risk aversion level $u$ of $\mathrm{AVaR}$.
Notice that both the functions $L$ and $\widetilde L$ as well as the Markov chain $\widetilde Z^{\lambda, n}_{m,h}$ in the approximation scheme depend on $u$.
We will make this dependence explicit in this section by writing
$L^u$, $\widehat{Z}^{\lambda,n,u}_{t}$, and $\widetilde{Z}^{\lambda,n,u}_{M,h}$ for the function and processes defined in \eqref{eq:object.L}, \eqref{eq:def.Z.lambda} and \eqref{Milstein} respectively.

\subsection{Proof of Theorem \ref{thm:general.rm}}
This subsection covers the proof of Theorem \ref{thm:general.rm}.
Recall that we approximate a general law invariant convex risk measure by 
\begin{equation*}
	\widetilde{\rho}^{\delta}(f):=\esssup_{\gamma \in \mathcal{M}: \beta(\gamma)\leq b} \bigg(\int_0^\delta \widetilde{\text{AVaR}}_u(f)\gamma(\diff u)-\beta(\gamma)\bigg),
\end{equation*}
with $\widetilde{\text{AVaR}}_u(f)=\frac{1}{N}\sum_{n=1}^N \overline{\ell^u}(\widetilde{Z}_{M,h}^{\prime,\lambda,n,u})$. 
Further define 
\begin{equation}\label{rhodelta}
	\rho^{\delta}(f) := \inf_{r\in A}\sup_{\gamma\in \mathcal{M}: \beta(\gamma)\leq b}\bigg(\int_0^\delta \text{AVaR}_u(f(r, S))\gamma(\diff u)-\beta(\gamma)\bigg), \quad \delta \in (0,1).
\end{equation}
To begin, we decompose the approximation error into two parts:
\begin{align}\label{approxerror}
	\E\Big[|\rho(f)-\widetilde{\rho}^{\delta}(f)|^2\Big] \leq 2|\rho(f)-\rho^\delta(f)|^2 + 2 \E\Big[|\rho^\delta(f)- \widetilde{\rho}^{\delta}(f)|^2\Big].
\end{align}
Let us estimate the first term. First, we will show  that for all $\delta\in (0,1)$, $\rho^\delta$ satisfies 
\begin{align}
\label{eq:rep.rho.delta}
	\rho^{\delta}(f)&=\sup_{\gamma\in \mathcal{M}: \beta(\gamma)\leq b}\bigg(\int_0^\delta \inf_{r\in A}\text{AVaR}_u(f(r,S))\gamma(\diff u)-\beta(\gamma)\bigg) \\\notag
	&=\sup_{\gamma\in \mathcal{M}: \beta(\gamma)\leq b}\bigg(\int_0^\delta \overline{\text{AVaR}}_u(f)\gamma(\diff u)-\beta(\gamma)\bigg).
\end{align}
Note that going from the definition in \eqref{rhodelta} to the expression in \eqref{eq:rep.rho.delta} requires interchaging the infimum and the supremum and then the infimum and the integral in the definition given in \eqref{rhodelta}. Since the supremum is taken over a compact set and the function $(r,\gamma)\mapsto \int_{0}^\delta\text{AVaR}_u(f(r,S))\gamma(du)-\beta(\gamma)$ is continuous in $r$ and upper semi--continuous and concave in $\gamma$, by \citeauthor*{kyfan}'s minimax theorem, see \cite[Theorem 2]{kyfan}, we can interchange the supremum and the infimum. To interchange the infimum and the integral, we apply Rockafellar's interchange theorem \citep{Rockafellar}.
Thus, using \Cref{bounded} and \Cref{eq:rep.rho.delta}, it holds that
\begin{equation*}
	|\rho(f)-\rho^\delta(f)|^2 \leq \sup_{\gamma\in \mathcal{M}: \beta(\gamma)\leq b}\int_\delta^1 |\overline{\text{AVaR}}(f)|^2\gamma(\diff u).
\end{equation*}
Let us partition the interval $[\delta,1)$ into $\bigcup_{n\geq1} I_n$, where $I_n= \big[ \delta+(1-2^{-n+1})(1-\delta),\delta+(1-2^{-n})(1-\delta) \big)$ for every $n \geq 1$. 
Defining $$\Gamma_b(I_n):= \sup_{\gamma \in \mathcal{M} \text{ s.t } \beta(\gamma)\leq b}\gamma(I_n)$$,
we obtain the estimation
\begin{align}
	|\bar\rho(f)-\rho^\delta(f)|^2
&\leq \sum_{n \geq 1}\Gamma_b(I_n)\sup_{u \in I_n}|\overline{\text{AVaR}}_u(f)|^2.\label{rhobardelta}
\end{align}
Now by \citep[Lemma 4.5 and Lemma 4.3]{Ludo}, we have 
\begin{equation}\label{gammain}
	\Gamma_b(I_n)\leq C ((1-\delta)2^{-n})^{1/q}\quad \text{and}\quad 	\left|\overline{\text{AVaR}}_u(f)\right|\leq \frac{C}{(1-u)^{1/p}},
\end{equation} 
for every $p \in (1,\infty)$.
Therefore, for any given $p \in (1,\infty)$ and $n \in  \mathbb{N}$, it holds 
\begin{equation}
	\sup_{u \in I_n} \left|\overline{\mathrm{AVaR}}_u(f)\right|^2 \leq \frac{C}{((1-\delta)2^{-n})^{2/p}}.\label{supavar}
\end{equation}
Choosing $p=4q$ and using \eqref{rhobardelta}, \eqref{gammain} and \eqref{supavar}, we have
\begin{align*}
	|\bar\rho(f)-\rho^\delta(f)|^2
	&\leq \sum_{n \geq 1}((1-\delta)2^{-n})^{1/q} \frac{C}{((1-\delta)2^{-n})^{2/p}}\\
	&\leq C (1-\delta)^{1/q-2/p}\sum_{n\geq 1}2^{n(2/p-1/q)}\\
	&\leq C(1-\delta)^{1/2q}\sum_{n \geq 1} 2^{-n/2q}.
\end{align*}
Since $q \in (1,\infty)$, it holds $\sum_{n \geq 1} 2^{-n/2q}\leq C$ for some universal constant $C$, and thus
\begin{equation}
\label{eq:bound.rho.delta}
	|\bar\rho(f)-\rho^\delta(f)|^2\leq C(1-\delta)^{1/2q}.  
\end{equation}

\medskip

For the second term in equation \eqref{approxerror}, we use the error rate for the estimation of AVaR. Let $\mathcal{M}^R$ be the set of all random probability measures on $[0,1)$. We have 
\begin{align*}
\E|\rho^\delta(f)- \widetilde{\rho}^{\delta}(f)|^2\Big] 
&\leq \E \Big[\esssup_{\gamma \in \mathcal{M}} \int_0^\delta |\widetilde{\text{AVaR}}_u-\text{AVaR}_u|^2\gamma (du)\Big]\\
&\leq \E \Big[\esssup_{\gamma \in \mathcal{M}^R} \int_0^\delta |\widetilde{\text{AVaR}}_u-\text{AVaR}_u|^2\gamma (du)\Big].
\end{align*}
For each random measure $\gamma^i \in \mathcal{M}^R$, define the corresponding random variable $A^i:=\int_0^\delta |\widetilde{\text{AVaR}}_u-\text{AVaR}_u|\gamma^i(du)$, and let $\mathcal{A}$ be the set of all such random variables. To make use of the error rate for the estimation of AVaR, we first show the set of random variables $\mathcal{A}$ is directed upward, i.e. for any pair of random variables $A^i, A^j \in \mathcal{A}$, there exists $\widetilde{A} \in \mathcal{A}$ with $\widetilde{A}\geq \max\{A^i, A^j\}$. Then, by the following theorem, we can rewrite $\esssup \mathcal{A}=\lim_{n}A^n$ for some increasing sequence $(A)_n \in \mathcal{A}$. 
\begin{theorem}\citep{Fol-Sch}\label{upward}
If $\mathcal{A}$ is directed upward, there exists an increasing sequence $A^1\leq A^2 \leq \cdots \in \mathcal{A}$ such that $\esssup \mathcal{A}=\lim_{n}A^n$ $\P$-almost surely.
\end{theorem}
To show $\mathcal{A}$ is directed upward, for any $\gamma^i, \gamma^j \in {\mathcal{A}},$ define the set of events 
\begin{equation}
B:=\Bigg\{\omega : \int_0^\delta |\widetilde{\text{AVaR}}_u-\text{AVaR}_u|\gamma^i(\omega,du) \geq \int_0^\delta |\widetilde{\text{AVaR}}_u-\text{AVaR}_u|\gamma^j(\omega,du)\Bigg\}, 
\end{equation}
and the random measure
\begin{equation}
\widetilde{\gamma}(\omega, du):=\gamma^1(\omega, du) \mathds{1}_{\omega \in B} +\gamma^2(\omega, du) \mathds{1}_{\omega \in B^C}.    
\end{equation}
Then, we have $\widetilde{A}:=\int |\widetilde{\text{AVaR}}_u-\text{AVaR}_u|\widetilde{\gamma}(\omega,du) \geq \max\{A^i, A^j\}$ and $\widetilde{A} \in \mathcal{A}$. Therefore, $\mathcal{A}$ is directed upward. By Theorem \ref{upward}, there exists an increasing sequence $(A^n)_n$  in $\mathcal{A}$ with $\esssup \mathcal{A}=\lim_{n}A^n$ $\P$-almost surely, and we have  
\begin{align*}
\E\Big[|\rho^\delta(f)- \widetilde{\rho}^{\delta}(f)|^2\Big]
&=\E \Big[
\lim_{n}\int_0^\delta |\widetilde{\text{AVaR}}_u-\text{AVaR}_u|^2\gamma^n(du)\Big]\\
&=\lim_{n}  \int_0^\delta \E\Big[|\widetilde{\text{AVaR}}_u-\text{AVaR}_u|^2\Big]\gamma^n(du)\\
&\leq \sup_{n \in \N} \int_0^\delta \E\Big[|\widetilde{\text{AVaR}}_u-\text{AVaR}_u|^2\Big]\gamma^n(du)\\
&\leq \sup_{u \in [0,\delta]}\E\Big[|\widetilde{\text{AVaR}}_u-\text{AVaR}_u|^2\Big]\sup_{n \in \N}\int_0^\delta \gamma^n(du),
\end{align*}
where we used monotone convergence theorem and Fubini's theorem in the second line. 
Using Theorem 2.3 and taking the supremum over $u \in (0,\delta)$, we have the result of the theorem. 

\section{Appendix}
Here we give explicit formulas for constants in the proof of Theorem 2.3.
\begin{align}
C^{'1}_{(u,t,\lambda)}
&=C\bigg(\Big(\frac{1}{1-u}\Big)^2+1\bigg)\frac{1}{t}\bigg(2^{1/t}+\Big(\frac{1}{(1-u)^4}+\frac{C_d}{\lambda^2}\Big)\cdot 2^{1/t}+1\bigg)\exp\bigg(C\frac{1}{t}\Big(\frac{1}{(1-u)^2}+1\Big)\bigg)\label{C'1}\\
C^{'2}_{(u,t,\lambda)}
&=C\bigg(3^{1/t}+\Big(\frac{1}{(1-u)^4}+1+\frac{C_d}{\lambda^2}\Big)\cdot 3^{1/t}+1\bigg)\label{C'2}\\
C^{'3}_{(u,t)}
&=C\frac{1}{(1-u)^4}\frac1t\exp\bigg(\frac{C}{t}\Big(\frac{1}{1-u}+1\Big)^4\bigg)\label{C'3}\\
C^{'4}_{(u,t)}
&=C\frac{1}{(1-u)^2}\bigg(2^{1/t}+\Big(\frac{1}{(1-u)^4}+\frac{1}{\lambda^2}C_d\Big)\cdot 2^{1/t}\bigg)\label{C'4}\\
C^{'5}_{(u,t,\lambda)}
&=\bigg(2^{1/t}+\Big(\frac{1}{(1-u)^4}+\frac{1}{\lambda^2}C_d\Big)\cdot 2^{1/t}\bigg)\label{C'5}\\
C^{'6}_{(u)}
&=C\frac{1}{(1-u)^4}\label{C'6}\\
C^{'7}_{(u,t,\lambda)}&=C\bigg(2^{1/t}+\Big(\frac{1}{(1-u)^4}+\frac{1}{\lambda^2}C_d\Big)\cdot 2^{1/t}+\Big(1+\frac{1}{t(1-u)^4}+\frac{1}{t\lambda^2}e^{1/t}\Big)^{1/2}\bigg)\label{C'7}\\
C^{'8}_{(u,t,\lambda)}
&=C\bigg(\frac{1}{\lambda(\frac{1}{1-u})^2}+\frac{1}{\lambda(\frac{1}{1-u})^2}\Big(1+\frac{1}{t(1-u)^4}+\frac{1}{t^2\lambda^2}\Big)^{1/2}e^{Ct}\bigg)\label{C'8}\\
C^{4}_{(u,\lambda)}&=\frac{C_{(\lambda)}}{(1-u)^2}\cW^2_1(\delta_z,\mu^\lambda_\infty)\label{C4}\\
C^{5}_{(\lambda)}&=2C_{(\lambda)}\label{C5}\\
C^{'9}_{(u,t,\lambda)}&=(C^{'10})^2+C\bigg(1+\frac{1}{t(1-u)^4}+\frac{1}{t^2\lambda^2}\bigg)e^{C/t}\label{C'9}\\
C^{'10}_{(u,\lambda)}&=\int \|x\|^2e^{C\lambda+2\lambda\|x\|/(1-u)- \|x\|^2/2}\diff x\label{C'10}\\
C^{6}_{(u)}&=C\Big(\log \sqrt{2\pi(1-u)}+1\Big)^2\label{C6}\\
C^1_{(u,t,\lambda)}&=1+C^{'4}_{(u,t)}+C^{'8}_{(u,t,\lambda)}\label{C1}\\
C^2_{(u,t,\lambda)}&=C^{'2}_{(u,t,\lambda)}+C^{'5}_{(u,t,\lambda)}+C^{'7}_{(u,t,\lambda)}+C^{'11}_{(u,t,\lambda)}+C\label{C2}\\
C^3_{(u,t,\lambda)}&=C^{'1}_{(u,t,\lambda)}+C^{'3}_{(u,t)}+C^{'6}_{(u)}\label{C3}\\
C^{7}_{(\delta,t,\lambda)}&=C\frac{1}{(1-\delta)^2}\bigg(2^{1/t}+\Big(\frac{1}{(1-\delta)^4}+\frac{1}{\lambda^2}C_d\Big)\cdot 2^{1/t}\bigg)+C\bigg(\frac{1}{\lambda}+\frac{1}{\lambda}\Big(1+\frac{1}{t(1-\delta)^4}+\frac{1}{t^2\lambda^2}\Big)^{1/2}e^{Ct}\bigg)\label{C7}.
\end{align}

\bibliography{bib_empirical_risk,references,bibliography,references3,bib_empirical_risk2,langevin_paper4}{}

\end{document}